\newcommand{\tpitchfork}{%
  \vbox{
    \baselineskip\z@skip
    \lineskip-.52ex
    \lineskiplimit\maxdimen
    \m@th
    \ialign{##\crcr\hidewidth\smash{$-$}\hidewidth\crcr$\pitchfork$\crcr}
  }%
}
\newcommand{\bea}{\begin{eqnarray}}
\newcommand{\ena}{\end{eqnarray}}
\newcommand{\bean}{\begin{eqnarray*}}
\newcommand{\enan}{\end{eqnarray*}}
\newcommand{\KN}{\mathbin{\bigcirc\mspace{-15mu}\wedge\mspace{3mu}}}
\newtheorem{prop}{Proposition}[section]
\newtheorem{cor}{Corollary}[prop]
\begin{document}

 \title{Remarks on the algebraic structure of $(2,2)$ double forms}
 
 \author{E. Goulart}
 \email{egoulart@ufsj.edu.br}
\affiliation{Federal University of S\~ao Jo\~ao d'El Rei, C.A.P. Rod.: MG 443, KM 7, CEP-36420-000, Ouro Branco, MG, Brazil
}

\author{J. E. Ottoni}
 \email{jeottoni@ufsj.edu.br}
\affiliation{Federal University of S\~ao Jo\~ao d'El Rei, C.A.P. Rod.: MG 443, KM 7, CEP-36420-000, Ouro Branco, MG, Brazil
}

\date{\today}

\begin{abstract}
We study the algebraic features of covariant tensors of valence four containing two blocks of skew indices. After a rather general treatment, we specialize ourselves to four-dimensional spacetimes and discuss several complementary aspects of these objects. In particular, we focus our attention on the corresponding invariant subspaces and generalise previous relations such as the Ruse-Lanczos identity, the Bel-Matte decomposition and the Lovelock-like quadratic identities. We conclude pointing out some possible applications of the formalism.  
\end{abstract} 
 \maketitle

\section{Introduction}

Covariant tensors of valence four containing two blocks of skew indices are called $(2,2)$ double forms. The latter pop-up everywhere in spacetime physics, come up with a myriad of algebraic facets and have a variety of physical applications. Well known examples in a four-dimensional spacetime  include the Riemann curvature tensor, the Weyl conformal tensor and the Levi-Civita totally anti-symmetric tensor \cite{gravitation, Hawking,Wald,Stephani}. Other geometric-inspired realizations include Kulkarni-Nomizu products of rank two tensors \cite{Besse, Gallot}, the traceless Plebanski tensor \cite{Pleb, Mc}, the Riemann-Cartan tensor \cite{Cartan1, Cartan2, Cartan3}, among others. 

The fact that $(2,2)$ double forms play a prominent role in general relativity by no means restrict their scope. In the so-called pre-metric approach to electrodynamics, their densitized versions play the role of the constitutive tensors governing the behavior of all sorts of local and linear continuous media \cite{Post, Obukhov1, Hehl1,Hehl2}. In area metric theory they appear as the building blocks in the generalization of pseudo-Riemannian geometry. Such generalizations naturally emerge as classical backgrounds in quantum string theory and quantum gauge theory finding a wealth of applications such as in the low energy action for $D$-branes and in the propagation of back-reacting photons in quantum electrodynamics \cite{Schuller1,Schuller2,Schuller3}. Furthermore, such tensors constitute the key ingredients in constructing other important tensorial objects such as the Bel-Robinson tensor, the Kummer tensor and the Rubilar-Tamm tensor \cite{quatrieme1, quatrieme2, Kummer}.

Given the variety of applications, it is not a surprise that the algebraic structure of these tensors have been studied by physicists and mathematicians along the years. However, by far, the main efforts have been dedicated to the particular case of curvaturelike double forms (the Riemann tensor itself is an instance of such objects). As is well known, the latter are characterized by the skew-symmetry property together with the interchange symmetry property and the first algebraic Bianchi identity. For this case, a wealth of results already constitute the basic toolkit of the relativist. Well known examples in four dimensions include the Ricci decomposition of the Riemann tensor \cite{gravitation, Hawking,Wald,Stephani}, the Ruse-Lanczos identity \cite{Lanczos1,Lanczos2, Ruse}, the so-called Lovelock identities \cite{Lovelock, Edgar1, Edgar2} and the Bel-Matte decomposition into lower-rank spacelike tensors \cite{Matte, Bel1,Bel2}. However, to the best of our knowledge, one hardly encounters self-contained literature dealing with similar results in the context of generic $(2,2)$ double forms. The aim of our work is to partially fulfill this gap. Although some of the results presented here do appear elsewhere \cite{Senovilla1, Senovilla2}, we hope that our schematic treatment is better-adapted to the study of the following open problems:

\begin{itemize}
    \item{Covariant hyperbolizations of electrodynamics inside linear and local media;}
    \item{Well-posedness of the Cauchy problem in lower-dimensional electrodynamics;}
    \item{Alternative derivations of the Tamm-Rubilar tensor using traditional methods;}
    \item{Algebraic classification and identification of invariants for constitutive tensors;}
\end{itemize}

Throughout, we adopt the following conventions. Spacetime indices $a,b,c,...$ take the values from $0$ to $3$; the greek letters $\alpha,\beta,\gamma,...$ serve to denote collective skew indices and take values from $1$ to $6$. Symbols inside curly brackets, such as $\{\Omega\}$ or $\{\Upsilon\}$, are reserved to describe an arbitrary collection of covariant and contravariant indices. Symmetrization and anti symmetrization are indicated as usual by round and square brackets respectively, e.g.
\begin{equation}
A_{(ab)}\equiv A_{ab}+A_{ba},\quad\quad\quad A_{[ab]}\equiv A_{ab}-A_{ba}.
\end{equation}



\section{Mathematical setting}

 \subsection{Generalities}


To begin with, we let $(M,g)$ denote a $m$-dimensional manifold with metric signature $(p,q)$, where $p+q=m$. Since our considerations will be essentially algebraic, we shall fix a point $x\in M$ and denote its tangent space by $V$, for simplicity. The vector space of all exterior $r$-forms, denoted by $\Lambda^{r}V^{*}$, is called the $r$-th exterior power of the cotangent space and its dimension reads as \cite{Lee, Bott} 
\begin{equation}
 \mbox{dim}(\Lambda^{r}V^{*})=\begin{pmatrix}
m \\
r \\
\end{pmatrix}=\frac{m!}{r!(m-r)!},\quad\quad\quad 0\leq r\leq m.
\end{equation}
The direct sum of all possible exterior powers, denoted by
\begin{equation}
\Lambda V^{*}\equiv\bigoplus_{r}\Lambda^{r}V^{*},
\end{equation}
is itself a vector space of dimension $2^{m}$ and the wedge product 
\begin{equation}
\wedge:\Lambda^{r}V^{*}\times\Lambda^{s}V^{*}\rightarrow\Lambda^{r+s}V^{*}
\end{equation}
turns $\Lambda V^{*}$ into an associative, anticommutative, graded algebra, called the \textit{exterior algebra} of $V^{*}$. The spaces we are interested in emerge from the tensor product of two such exterior algebras 
\begin{equation}
\mathcal{D}\equiv\Lambda V^{*}\otimes\Lambda V^{*}=\bigoplus_{0\leq r,s \leq m}\ \mathcal{D}^{r,s}
\end{equation}
where $\mathcal{D}^{r,s}\equiv\Lambda^{r}V^{*}\otimes \Lambda^{s}V^{*}$ is called the space of $(r,s)$ \textit{double forms}. An element of $\mathcal{D}^{r,s}$ is simply a $r+s$ covariant tensor which can be identified canonically with a bilinear form $\Lambda^{r}V\times\Lambda^{s}V\rightarrow\mathbb{R}$ and, if we stick to a (fixed) coordinate basis for $V$, an element $X\in\mathcal{D}^{r,s}$ is expanded as
\begin{equation}
X=\frac{1}{r!s!}X_{a_{1}...a_{r}b_{1}...b_{s}}\ dx^{a_{1}}\wedge...\wedge dx^{a_{r}} \otimes dx^{b_{1}}\wedge...\wedge dx^{b_{s}}.
\end{equation}
Hence, we have two blocks of skew-symmetric indices (see Appendix A for conventions)
\begin{equation}
X_{a_{1}...a_{r}b_{1}...b_{s}}=\frac{1}{r!}X_{[a_{1}...a_{r}]b_{1}...b_{s}}=\frac{1}{s!}X_{a_{1}...a_{r}[b_{1}...b_{s}]}.  
\end{equation}
The exterior product of two double forms is defined as
\begin{equation}
X\KN Y\equiv(\theta_{1}\wedge\omega_{1})\otimes(\theta_{2}\wedge\omega_{2})\in\mathcal{D}^{r+k,\ s+l}    \end{equation}
where $X=\theta_{1}\otimes\theta_{2}\in\mathcal{D}^{r,s}$ and $Y=\omega_{1}\otimes\omega_{2}\in\mathcal{D}^{k,l}$. In abstract index notation, this product reads as
\begin{equation}\label{KNproduct}
(X\KN Y)_{a_{1}...a_{r}b_{1}...b_{k}c_{1}...c_{s}d_{1}...d_{l}}=X_{[a_{1}...a_{r}[c_{1}...c_{s}}Y_{b_{1}...b_{k}]d_{1}...d_{l}]}\end{equation}
where total antisymmetrization is supposed to act only on the $[ab]$ and $[cd]$ blocks independently. An important instance of this product is when both the double forms are identified with rank two tensors. In this case if $k,h\in\mathcal{D}^{1,1}$ Eq. (\ref{KNproduct}) reduces to
\begin{equation}
(k\KN h)_{abcd}=k_{[a[c}h_{b]d]}=k_{ac}h_{bd}-k_{ad}h_{bc}+h_{ac}k_{bd}-h_{ad}k_{bc},
\end{equation}
in which it is called a \textit{Kulkarni-Nomizu product}. More generally, the exterior product $\KN$ turns $\mathcal{D}$ into the bi-graded associative algebra studied by mathematicians such as Greub, Thorpe and Kulkarni in the late sixties \cite{Greub, Thorpe,Kulkarni}. The reader is invited to consult \cite{Labbi1, Labbi2}, where several classical results concerning the algebra of multilinear forms are obtained within this framework. 

\subsection{The space of $(2,2)$ double forms}

This paper deals with the particular case of $(2,2)$ double forms in a four-dimensional spacetime. However, we start by recalling some general properties of these tensors in a space of arbitrary dimension and signature. In a fixed coordinate basis, the components of an element of $\mathcal{D}^{2,2}$ are characterized by two blocks of skew-symmetric indices
 \begin{equation}
  X_{abcd}=-X_{bacd},\quad\quad\quad X_{abcd}=-X_{abdc}.
 \end{equation}
Such a tensor has a total of $m^{2}(m-1)^{2}/4$ independent components and  a simple combinatorial exercise shows that we can construct at most six linearly independent tensors using permutations of the indices. A convenient choice of them is provided by
\begin{eqnarray}
\ X_{abcd},\ X_{acdb},\ X_{adbc}\ ;\ X_{cdab},\ X_{dbac},\ X_{bcad}.
\end{eqnarray}
The double form obtained by the particular index mapping $ab\leftrightarrow cd$ deserves a special name. Due to obvious reasons, we shall call $X_{cdab}$ the transpose of $X_{abcd}$. We then notice that the first triple above is obtained from the original tensor by cyclic permutations of the indices $bcd$, whereas the second triple is obtained from the transpose by similar permutations. 

Apart from the skew symmetries of each of the blocks one may have to deal with some additional symmetries involving indices of different blocks and an elementary calculation originally due to Schouten shows that (see \cite{Schouten} for details)
\begin{equation}\label{Schouten}
X_{abcd}=X_{cdab}+\frac{1}{4}(X_{a[bcd]}-X_{b[cda]}-X_{c[dab]}+X_{d[abc]}).
\end{equation}
In most of physical applications we are interested in double forms which may carry special algebraic symmetries and simple examples of the latter already appear in the trivial decomposition
\begin{equation}\label{SA}
X_{abcd}\ =\ ^{(s)}X_{abcd}\ +\ ^{(a)}X_{abcd},
\end{equation}
where
\begin{eqnarray}\label{sym1}
^{(s)}X_{abcd}&\equiv&\frac{1}{2}(X_{abcd}+X_{cdab}),\quad\quad\quad ^{(s)}X_{abcd}=+^{(s)}X_{cdab},\\\label{sym2}
^{(a)}X_{abcd}&\equiv&\frac{1}{2}(X_{abcd}-X_{cdab}),\quad\quad\quad ^{(a)}X_{abcd}=-^{(a)}X_{cdab}.
\end{eqnarray}
Denoting the symmetric and antisymmetric subspaces, respectively, by $\mbox{Sym}$ and $\mbox{Skew}$, there follows the orthogonal direct sum
\begin{equation}\label{ordirect1}
\mathcal{D}^{2,2}= \mbox{Sym}\oplus\mbox{Skew},
 \end{equation}
with
 \begin{equation}
 \mbox{dim}(\mbox{Sym})=\frac{1}{8}(m^{2}-m)(m^{2}-m+2),\quad\quad\quad  \mbox{dim}(\mbox{Skew})=\frac{1}{8}(m+1)m(m-1)(m-2),
 \end{equation}
and Eq. (\ref{Schouten}) gives 
\begin{equation}
^{(s)}X_{a[bcd]}\ -\ ^{(s)}X_{b[cda]}\ =\ ^{(s)}X_{c[dab]}\ -\ ^{(s)}X_{d[abc]},
\end{equation}
\begin{equation}
^{(a)}X_{abcd}\ =\ \frac{1}{8}\left(\ ^{(a)}X_{a[bcd]}\ -\ ^{(a)}X_{b[cda]}\ -\ ^{(a)}X_{c[dab]}\ +\ ^{(a)}X_{d[abc]}\right).  
\end{equation}
Concerning total antisymmetrization, an elementary calculation shows that
\begin{equation}
^{(s)}X_{[abcd]}=4\ ^{(s)}X_{a[bcd]},\quad\quad\quad ^{(a)}X_{[abcd]}=0.
\end{equation}
For obvious reasons, $^{(s)}X_{abcd}$ is called a \textit{symmetric} $(2,2)$ \textit{double form} whereas $^{(a)}X_{abcd}$ is called an \textit{antisymmetric} $(2,2)$ \textit{double form} and the above decomposition is orthogonal in the sense that
\begin{equation}
X_{abcd}X^{abcd}\ =\ ^{(s)}X_{abcd}\ ^{(s)}X^{abcd}+\ ^{(a)}X_{abcd}\ ^{(a)}X^{abcd}.
\end{equation}

\subsection{Invariant subspaces}

When the group of general linear transformations acts on the tangent space, it induces a mapping of the space of double forms onto itself and, since all symmetries discussed above are preserved by this mapping, we say that $\mbox{Sym}$ and $\mbox{Skew}$ are invariant subspaces under the group action. A refined version of Eq. (\ref{SA}) containing three invariant subspaces under the action of $GL(m,\mathbb{R})$ is the following
\begin{equation}\label{ordirect2}
\mathcal{D}^{2,2}=\underbrace{\mbox{Curv}\ \oplus\ \Lambda^{4}V^{*}}_{\mbox{Sym}}\ \oplus\ \mbox{Skew}.
 \end{equation}
where $\mbox{Curv}$ is the orthogonal complement of the space of $4$-forms in the symmetric space. This subset is called the vector space of \textit{algebraic curvature tensors} and an element of the latter will be called \textit{curvaturelike} for short. Its dimension reads as
\begin{equation}
 \mbox{dim}(\mbox{Curv})=\frac{m^{2}(m^{2}-1)}{12}
 \end{equation}
and, in practice, we have
\begin{equation}\label{CLA}
 X_{abcd}\ =\ ^{(c)}X_{abcd}\ +\ ^{(4)}X_{abcd}\ +\ ^{(a)}X_{abcd},
 \end{equation}\\
 where $^{(c)}X_{abcd}$ is curvaturelike and $^{(4)}X_{abcd}$ is a $4$-form. These tensors are distinguished within the symmetric subspace by the following additional symmetries\\
 \begin{eqnarray}\label{sym3}
^{(c)}X_{a[bcd]}=0,\quad\quad\quad ^{(4)}X_{[abcd]}=4!\ ^{(4)}X_{abcd},\quad\quad\quad ^{(c)}X_{abcd}\ ^{(4)}X^{abcd}=0.
\end{eqnarray}\\
More generally, it can be checked that an arbitrary double form satisfying the algebraic Bianchi identity, $X_{a[bcd]}=0$, is necessarily symmetric and the so-called Milnor octahedron construction gives a beautiful geometric interpretation for this argument \cite{Milnor}. 

So far, the metric tensor did not play any role in the above decompositions. In order to involve it explicitly, we start by associating to a generic double form a second order tensor and a scalar, respectively defined by
\begin{equation}
X_{ac}\equiv g^{bd}X_{abcd},\quad\quad\quad X\equiv g^{ac}X_{ac}.
\end{equation}
Henceforth, we call $X_{ac}$ the \textit{first trace} and $X$ the \textit{second trace} of the double form, for simplicity\footnote{ \samepage Mathematicians often refer to these operations as the Ricci contraction maps. If $X_{abcd}$ is identified with the Riemann curvature tensor, the first contraction would be the Ricci tensor whereas the second contraction would be the scalar curvature.}. Furthermore, removing the second trace from the first, we obtain the traceless \textit{hatted tensor}
\begin{equation}\label{tlt}
\hat{X}_{ab}\equiv X_{ab}-\frac{1}{m}Xg_{ab},\quad\quad\quad\hat{X}=0.
\end{equation}
We shall say that the double form itself is \textit{traceless} whenever any possible contraction of its indices vanish and \textit{hatless} whenever its associated rank-2 hatted tensor vanishes. According to these definitions, Eq. (\ref{CLA}) gives
\begin{equation}
X_{ac}\ =\ ^{(c)}X_{ac}\ +\ ^{(a)}X_{ac},\quad\quad\quad X\ =\ ^{(c)}X,\quad\quad\quad\hat{X}_{ac}\ =\ ^{(c)}\hat{X}_{ac}\ +\ ^{(a)}\hat{X}_{ac}, 
\end{equation}
since the four-form term has no associated traces. The main reason behind these definitions is the fact that for $m\geq 3$, every $(2,2)$ double form admit a \textit{trace decomposition} as follows
\begin{equation}\label{Tdec}
X_{abcd}=W_{abcd}+\frac{1}{(m-2)}g_{[a[c}\hat{X}_{b]d]}+\frac{X}{m(m-1)}g_{abcd},
\end{equation}
where, again, antisymmetrization is supposed to act only on the pairs $[ab]$ and $[cd]$ individually and $g_{abcd}=g_{a[c}g_{bd]}$ is sometimes called the bi-metric. The above decomposition reinforces the naturalness of the Kulkarni-Nomizu products in the context of double forms (see Appendix B). Here, the first tensor on the r.h.s. is traceless, the second has only the first trace and the third has both the first and second traces. 

It turns out that the trace decomposition provides a splitting of the space algebraic curvature tensors into three smaller spaces and the splitting of the antisymmetric space into two new subspaces. Indeed, applying Eq. (\ref{Tdec}) to each term in Eq. (\ref{CLA}) yields the breaking up of the subspaces as follows
\begin{equation}
^{(c)}X_{abcd}\ =\ ^{(1)}X_{abcd}\ +\ ^{(2)}X_{abcd}\ + ^{(3)}X_{abcd},
\end{equation}
\begin{equation}
^{(a)}X_{abcd}\ =\ ^{(5)}X_{abcd}\ +\ ^{(6)}X_{abcd},
\end{equation}
where $^{(1)}X_{abcd}$ is curvaturelike traceless, $^{(5)}X_{abcd}$ is antisymmetric traceless and the Kulkarni-Nomizu products involving the metric and the hatted tensors read as
\begin{equation}
^{(2)}X_{abcd}\ \equiv\  \frac{1}{(m-2)}g_{[a[c}\ ^{(c)}\hat{X}_{b]d]},\quad\quad ^{(3)}X_{abcd}\equiv \frac{X}{m(m-1)}g_{abcd}\quad\quad ^{(6)}X_{abcd}\ \equiv\  \frac{1}{(m-2)}g_{[a[c}\ ^{(a)}\hat{X}_{b]d]}.
\end{equation}
Collecting all the above pieces, and noticing that the space of four-forms remain intact, one concludes that an arbitrary double form decomposes as the sum of six \textit{irreducible} terms
 \begin{equation}\label{Irreddec}
X_{abcd}=\ ^{(1)}X_{abcd}\ +\ ^{(2)}X_{abcd}\ + ^{(3)}X_{abcd}\ +\ ^{(4)}X_{abcd}+\ ^{(5)}X_{abcd}\ +\ ^{(6)}X_{abcd}.
\end{equation}
In what follows, we shall represent the corresponding invariant subspaces as\\
\begin{equation}\label{almostfull}
\mathcal{D}^{2,2}=^{(s)}\mbox{Weyl}\oplus^{(s)}\mbox{KN1}\oplus^{(s)}\mbox{KN2}\oplus^{(s)}\Lambda^{4}V^{*}\oplus^{(a)}\mbox{Weyl}\oplus^{(a)}\mbox{KN1},
\end{equation}\\
and stick to the following terminology: $^{(s)}\mbox{Weyl}$ is the space of \textit{symmetric Weyl tensors}, $^{(s)}\mbox{KN1}$ is the space of \textit{symmetric Kulkarni-Nomizu of the first kind}, $^{(s)}\mbox{KN2}$ the space of \textit{symmetric Kulkarni-Nomizu of the second kind} and similarly for the antisymmetric subspaces. The first three symmetric spaces on the r.h.s. of Eq. (\ref{almostfull}) are well known to relativists and their direct sum is often referred as the Ricci decomposition. Little-known, however, is the fact that the antisymmetric part also admits a similar decomposition in terms of an antisymmetric traceless tensor and an antisymmetric Kulkarni-Nomizu product. For $m\geq3$, decomposition Eq. (\ref{almostfull}) is irreducible under the pseudo-orthogonal group $O(p,q)$ and the algebraic properties of all relevant tensors are summarized in the following table\footnote{For additional details concerning the uniqueness, irreducibility and invariance of decompositions, we refer the reader to \cite{Riemann-Cartan,Obukhov1, Itin}.}:\\

\begin{table}[h]
\begin{center}
\begin{tabular}{|c|c|c|r|r|r|r|r|c|}
\hline
\multicolumn{2}{|c|}{Subspace} & Tensor & $[abcd]$ & $a[bcd]$ & 1st. & 2nd. & Hat & dimensionality  \\
\hline\hline
\multirow{3}{*}{\mbox{Curv}} & $^{(s)}\mbox{Weyl}$ & $^{(1)}X_{abcd}$ & $0$ & $0$ & $0$ & $0$ &$0$& $\tfrac{1}{12}m(m+1)(m+2)(m-3)$\\\cline{2-9}
 & $^{(s)}\mbox{KN1}$ & $^{(2)}X_{abcd}$ & $0$ & $0$ & $\neq 0$ & $0$ &$\neq 0$ & $\tfrac{1}{2}(m-1)(m+2)$\\\cline{2-9}
 & $^{(s)}\mbox{KN2}$ & $^{(3)}X_{abcd}$ & $0$ & $0$ & $\neq 0$ & $\neq 0$ & $0$ & $1$\\\hline
$\Lambda^4V^*$ & $^{(s)}\Lambda^4V^*$ & $^{(4)}X_{abcd}$ & $\neq 0$ & $\neq 0$ & $0$ & $0$ & $0$ & $\binom{m}{4}$\\\hline
\multirow{2}{*}{\mbox{Skew}} & $^{(a)}\mbox{Weyl}$ & $^{(5)}X_{abcd}$ & $0$ & $\neq 0$ & $0$ & $0$ & $0$& $\tfrac{1}{8}m(m-1)(m+2)(m-3)$\\\cline{2-9}
 & $^{(a)}\mbox{KN2}$ & $^{(6)}X_{abcd}$ & $0$ & $\neq 0$ & $\neq 0$ & $0$ &$\neq 0$ & $\tfrac{1}{2}m(m-1)$\\\hline
\end{tabular}
\caption{\label{table1}Invariant subspaces of the space of $(2,2)$ double forms; the possible additional symmetries and traces of its tensors, and the corresponding dimension of each subspace.}
\end{center}
\end{table}

\section{Hodge duality and the Ruse-Lanczos identity}

From now on we restrict ourselves to a four-dimensional spacetime $(M,\ g)$ with signature convention $(+,-,-,-)$. In this case, Hodge dualization provides a simple way of producing new $(2,2)$ double forms from given ones. In particular, we construct the left, right and double duals of $X\in\mathcal{D}^{2,2}$ as follows

 \begin{equation}\label{doubledual}
\overleftarrow{\star} X_{abcd}\equiv \frac{1}{2}\varepsilon_{ab}^{\phantom a\phantom apq}X_{pqcd},\quad\quad\quad \overrightarrow{\star}X_{abcd} \equiv \frac{1}{2}\varepsilon_{cd}^{\phantom a\phantom apq}X_{abpq},\quad\quad \overleftrightarrow{\star} X_{abcd} \equiv \frac{1}{4}\varepsilon_{ab}^{\phantom a\phantom a pq}\varepsilon_{cd}^{\phantom a\phantom a rs}X_{pqrs}.
\end{equation}
According to this notation, the direction of the arrow on top of the $\star$ indicates the pair of skew indices which are to be dualized. This convention is not standard. However, it has the advantage of avoiding an excessive number of stars in expressions involving several dualized double forms. A direct consequence of the hyperbolic signature of the metric is that
\begin{equation}
\overleftarrow{\star}\circ\overleftarrow{\star}=-1,\quad\quad\quad \overrightarrow{\star}\circ\overrightarrow{\star}=-1,\quad\quad\quad \overleftrightarrow{\star}\circ\overleftrightarrow{\star}=1.
\end{equation}
Let us now prove a fundamental identity concerning the Hodge duals associated to an arbitrary $(2,2)$ double form. A prototype of the identity was first obtained by Einstein, in particular coordinates, while studying the splitting of the Riemann curvature tensor into its self-dual and anti-self-dual parts \cite{Einstein}.  The same relation was later obtained, in a more covariant fashion, by Lanczos in the context of scale-invariant action principles \cite{Lanczos1,Lanczos2} and by Ruse in the so-called line-geometry of the Riemann tensor \cite{Ruse}. 

A direct generalisation of the identity starts with the definition of the double dual in its mixed form and replacing the product of Levi-Civita tensors by the corresponding generalized Kronecker delta (see Appendix A)
\begin{equation}\label{R-L}
\overleftrightarrow{\star} X^{ab}_{\phantom a\phantom a cd}=\frac{1}{4}\varepsilon^{abpq}\varepsilon_{cdrs}X_{pq}^{\phantom a\phantom a rs}=-\frac{1}{4}\delta^{abpq}_{\phantom a\phantom a\phantom a\phantom a cdrs}X_{pq}^{\phantom a\phantom a rs}.
\end{equation}
The last term is expanded as
\begin{eqnarray*}
-\frac{1}{4}(\delta^{a}_{\phantom a c}\delta^{bpq}_{\phantom a\phantom a\phantom a drs}-\delta^{a}_{\phantom a d}\delta^{bpq}_{\phantom a\phantom a\phantom a rsc}+\delta^{a}_{\phantom a r}\delta^{bpq}_{\phantom a\phantom a\phantom a scd}-\delta^{a}_{\phantom a s}\delta^{bpq}_{\phantom a\phantom a\phantom a cdr})X_{pq}^{\phantom a\phantom a rs}.
\end{eqnarray*}
\noindent Calculating by parts, there follow
\begin{itemize}
\item{$\delta^{a}_{\phantom a c}\delta^{bpq}_{\phantom a\phantom a\phantom a drs}X_{pq}^{\phantom a\phantom a rs}=2\delta^{a}_{\phantom a c}\delta^{b}_{\phantom a d}X-4\delta^{a}_{\phantom a c}X_{d}^{\phantom a b}$,}
\item{$\delta^{a}_{\phantom a d}\delta^{bpq}_{\phantom a\phantom a\phantom a rsc}X_{pq}^{\phantom a\phantom a rs}=2\delta^{a}_{\phantom a d}\delta^{b}_{\phantom a c}X-4\delta^{a}_{\phantom a d}X_{c}^{\phantom a b}$,}
\item{$\delta^{a}_{\phantom a r}\delta^{bpq}_{\phantom a\phantom a\phantom a scd}X_{pq}^{\phantom a\phantom a rs}=2\delta^{b}_{\phantom a c}X_{d}^{\phantom a a}-2\delta^{b}_{\phantom a d}X_{c}^{\phantom a a}+2X_{cd}^{\phantom a\phantom a ab}$,}
\item{$\delta^{a}_{\phantom a s}\delta^{bpq}_{\phantom a\phantom a\phantom a cdr}X_{pq}^{\phantom a\phantom a rs}=2\delta^{b}_{\phantom a d}X_{c}^{\phantom a a}-2\delta^{b}_{\phantom a c}X_{d}^{\phantom a a}-2X_{cd}^{\phantom a\phantom a ab}$,}
\end{itemize}
and collecting the terms we obtain, after simple manipulations
\begin{equation}\label{Ruse-Lanczos}
\overleftrightarrow{\star} X_{abcd}+X_{cdab}=(g_{ac}X_{db}-g_{ad}X_{cb}+g_{bd}X_{ca}-g_{bc}X_{da})-\frac{1}{2}Xg_{abcd}.
\end{equation}
Interestingly, the right hand side of the equation may be written in terms of a Kulkarni-Nomizu product of the metric $g_{ab}$ and the hatted tensor $\hat{X}_{ab}$ associated to $X_{abcd}$. Indeed, using the definition of the trace-free tensor Eq. (\ref{tlt}), we obtain the following proposition
\begin{prop}[Ruse-Lanczos identity]
Let $X_{abcd}$ denote an arbitrary element of the vector space $\mathcal{D}^{2,2}$. Then, the following relation holds
\begin{equation}\label{HD}
\overleftrightarrow{\star}X_{abcd}+X_{cdab}=g_{[a[c}\hat{X}_{d]b]},
\end{equation}
where $\hat{X}_{ab}= X_{ab}-\frac{1}{4}Xg_{ab}$ and anti-symmetrization is supposed to act on the pairs $[ab]$ and $[cd]$ individually.
\end{prop}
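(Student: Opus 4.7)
The plan is to compute the double dual in its mixed form, reduce the product of Levi-Civita tensors to a generalised Kronecker delta, contract with $X$, and then recognise the output as a Kulkarni-Nomizu type expression involving $\hat X$.

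First I would write
$$\overleftrightarrow{\star} X^{ab}{}_{cd} \;=\; \tfrac14\,\varepsilon^{abpq}\varepsilon_{cdrs}\, X_{pq}{}^{rs} \;=\; -\tfrac14\,\delta^{abpq}_{cdrs}\, X_{pq}{}^{rs},$$
using that in Lorentzian signature the product of two Levi-Civita symbols gives minus a generalised Kronecker delta. I would then expand this $8$-index delta along its first upper slot, producing four terms of the schematic form $\delta^{a}{}_{?}\,\delta^{bpq}_{???}$; each six-index delta further expands into six elementary Kronecker products, and the contractions with $X_{pq}{}^{rs}$ collapse by the skew-symmetry of $X$ in $(p,q)$ and in $(r,s)$. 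The surviving contributions can only be proportional to $X$, to the first trace $X_{ab}$, or to $X_{cdab}$ itself, and careful bookkeeping of signs yields the intermediate identity
$$\overleftrightarrow{\star} X_{abcd} + X_{cdab} \;=\; g_{ac}X_{db} - g_{ad}X_{cb} + g_{bd}X_{ca} - g_{bc}X_{da} - \tfrac12\,X\,g_{abcd}.$$

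The final step is to recognise the right-hand side as $g_{[a[c}\hat X_{d]b]}$. Expanding the double antisymmetrisation gives the four-term sum
$$g_{[a[c}\hat X_{d]b]} \;=\; g_{ac}\hat X_{db} - g_{ad}\hat X_{cb} + g_{bd}\hat X_{ca} - g_{bc}\hat X_{da},$$
and substituting $\hat X_{ab} = X_{ab} - \tfrac14 X\,g_{ab}$ produces exactly the first four terms of the intermediate identity plus an extra $-\tfrac14 X\,g_{[a[c}g_{d]b]} = -\tfrac12 X\,g_{abcd}$ (using $g_{[a[c}g_{d]b]} = 2\,g_{abcd}$), which is precisely the residual metric piece. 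The two expressions then coincide and the claim follows.

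The only genuine obstacle is the combinatorial bookkeeping in expanding the $8$-index Kronecker delta into roughly two dozen elementary terms and reducing them by the skew-symmetries of $X$ — a step where sign errors or double counting are easy to make. It is worth emphasising that no hypothesis on the (anti)symmetry of $X_{abcd}$ under the block exchange $ab\leftrightarrow cd$ enters anywhere; in particular $X_{ab}$ is not assumed symmetric, which is consistent with the proposition being stated for an arbitrary element of $\mathcal{D}^{2,2}$.
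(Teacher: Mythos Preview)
Your proposal is correct and follows essentially the same route as the paper: write the double dual via the generalised Kronecker delta, expand along the first upper slot to obtain the intermediate identity $\overleftrightarrow{\star} X_{abcd}+X_{cdab}=(g_{ac}X_{db}-g_{ad}X_{cb}+g_{bd}X_{ca}-g_{bc}X_{da})-\tfrac12 X\,g_{abcd}$, and then repackage the right-hand side as $g_{[a[c}\hat X_{d]b]}$ using $\hat X_{ab}=X_{ab}-\tfrac14 X g_{ab}$. Your remark that no block-exchange symmetry of $X$ is assumed is also exactly the point the paper stresses in contrasting this with the classical curvature-only versions.
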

\noindent Equation (\ref{HD}) is a generalised version of the classical Ruse-Lanczos identity \cite{Einstein, Lanczos1,Lanczos2,Ruse}. The main difference between the two approaches resides in the particular position of the indices they generate: here, we are forced to consider the transposes of both the double form and the hatted tensor. This is expected, since we are not assuming any additional algebraic symmetries, as is the case for algebraic curvature tensors studied in the above-mentioned articles. Roughly speaking, the identity says that the double dual of any double form plus its transpose generates a new double form possessing only the first trace: in other words, the linear combination lies in the subspace defined by the orthogonal direct sum $^{(s)}\mbox{KN1}\oplus\ ^{(a)}\mbox{KN1}$. Indeed, a direct consequence of the above proposition is the following corollary

\begin{cor}
Decompose a generic double form into its symmetric and anti-symmetric parts as $X_{abcd}=S_{abcd}+A_{abcd}$. Then, applying Eq. (\ref{HD}) to each invariant part gives
\begin{eqnarray}
\ \overleftrightarrow{\star}S_{abcd}+S_{abcd}=g_{[a[c}\hat{S}_{b]d]},\quad\quad\quad
\ \overleftrightarrow{\star}A_{abcd}-A_{abcd}=-g_{[a[c}\hat{A}_{b]d]}.
\end{eqnarray}
\end{cor}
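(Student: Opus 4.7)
The plan is to apply the Ruse-Lanczos identity \eqref{HD} separately to $S_{abcd}$ and $A_{abcd}$, and then exploit the block-swap symmetries \eqref{sym1}--\eqref{sym2} together with the corresponding symmetries inherited by their first traces.

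For the symmetric piece, $S_{cdab}=S_{abcd}$ turns the transpose on the left-hand side of \eqref{HD} into $S_{abcd}$ itself, producing
\begin{equation*}
\overleftrightarrow{\star}S_{abcd}+S_{abcd}=g_{[a[c}\hat{S}_{d]b]}.
\end{equation*}
To match the index ordering claimed in the corollary, I would then verify that the first trace $S_{ac}$ is symmetric: this is a one-line dummy-index manipulation, since starting from $S_{ac}=g^{bd}S_{abcd}=g^{bd}S_{cdab}$ and relabeling $b\leftrightarrow d$ together with the symmetry of $g$ recovers $S_{ca}$. As $\hat{S}_{ab}=\hat{S}_{ba}$, interchanging the two slots of $\hat{S}$ inside the Kulkarni-Nomizu bracket is cost-free, so $g_{[a[c}\hat{S}_{d]b]}=g_{[a[c}\hat{S}_{b]d]}$ and the first identity follows.

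For the antisymmetric piece, $A_{cdab}=-A_{abcd}$ flips the sign of the transpose term, yielding
\begin{equation*}
\overleftrightarrow{\star}A_{abcd}-A_{abcd}=g_{[a[c}\hat{A}_{d]b]}.
\end{equation*}
The same relabeling argument gives $A_{ac}=-A_{ca}$; contracting against the symmetric $g^{ac}$ then forces the second trace $A$ to vanish, so $\hat{A}_{ab}=A_{ab}$ is itself antisymmetric. Interchanging the two slots of $\hat{A}$ inside the bracket now costs a sign, $g_{[a[c}\hat{A}_{d]b]}=-g_{[a[c}\hat{A}_{b]d]}$, which is the second stated identity.

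The whole argument is essentially linear, and the only real obstacle worth flagging is the bookkeeping of index slots on $\hat{X}$ inside the nested antisymmetrisation brackets on the right-hand side of \eqref{HD}; once one checks that the parent block-swap symmetry $S_{cdab}=\pm S_{abcd}$ is faithfully inherited by the associated first trace, each identity collapses to a single sign adjustment.
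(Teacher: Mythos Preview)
Your proof is correct and follows exactly the approach the paper intends: the paper presents this corollary as ``a direct consequence of the above proposition'' with the remark that ``all indices are now in their conventional form'', and your argument spells out precisely that index bookkeeping (the inherited symmetry/antisymmetry of the first trace and the resulting sign on the Kulkarni--Nomizu bracket) which the paper leaves implicit.
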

\noindent We notice that all indices are now in their conventional form. Also, one immediately sees that the first combination belongs to $^{(s)}\mbox{KN1}$, whereas the second belongs to $^{(a)}\mbox{KN1}$. Furthermore, since $\hat{S}_{ab}$ and $\hat{A}_{ab}$ are trace-less by construction, there follows the first trace relations

\begin{cor}
Contracting $b$ with $d$ above and rearranging the terms, gives
\begin{equation}\label{HDT}
\overleftrightarrow{\star} S_{ac}=S_{ac}-\frac{1}{2}Sg_{ac},\quad\quad\quad \overleftrightarrow{\star} A_{ac}=-A_{ac}.
\end{equation}
\end{cor}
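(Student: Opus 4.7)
The plan is to trace both identities in the preceding corollary against the inverse metric $g^{bd}$, and then exploit the different symmetry properties of $\hat{S}_{ab}$ and $\hat{A}_{ab}$ to separate the two cases. On the left-hand sides, the contraction directly yields $\overleftrightarrow{\star}S_{ac}+S_{ac}$ and $\overleftrightarrow{\star}A_{ac}-A_{ac}$ by the very definition of the first trace.

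For the right-hand sides, I would expand the Kulkarni-Nomizu combination $g_{[a[c}\hat{X}_{b]d]}$ into its four constituent monomials, in the same fashion as the bracketed expression on the r.h.s. of equation (\ref{Ruse-Lanczos}). Contracting each monomial with $g^{bd}$ produces three types of contributions: (i) a term proportional to $g_{ac}g^{bd}\hat{X}_{bd}$, which vanishes because $\hat{X}$ is traceless by construction; (ii) two terms that collapse via $\delta$-contractions to a single copy of $\hat{X}_{ca}$; and (iii) a term producing $g^{bd}g_{bd}\hat{X}_{ca}=4\hat{X}_{ca}$, using the four-dimensionality of spacetime. Combining, the net contribution reduces to $2\hat{X}_{ca}$.

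To close the symmetric case, I would use that $\hat{S}_{ab}$ is symmetric (inherited from the interchange symmetry $S_{abcd}=S_{cdab}$), so $\hat{S}_{ca}=\hat{S}_{ac}=S_{ac}-\tfrac{1}{4}Sg_{ac}$, yielding $\overleftrightarrow{\star}S_{ac}+S_{ac}=2S_{ac}-\tfrac{1}{2}Sg_{ac}$ which is the first claim. For the antisymmetric case, the interchange antisymmetry $A_{abcd}=-A_{cdab}$ forces $A_{ab}$ to be antisymmetric, so $A=0$, hence $\hat{A}_{ab}=A_{ab}$, and crucially $\hat{A}_{ca}=-A_{ac}$. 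Taking into account the explicit minus sign already present on the r.h.s. of the antisymmetric identity from Cor.~1, the two signs combine to deliver $\overleftrightarrow{\star}A_{ac}-A_{ac}=-2A_{ac}$, which rearranges to the second claim.

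The whole argument is really a tracking-of-signs exercise; the one subtle point is that $\hat{X}_{ca}$ and $\hat{X}_{ac}$ are interchangeable in the symmetric case but carry opposite signs in the antisymmetric one, so the same formal calculation on the r.h.s. delivers two different pre-factors ($+2\hat{S}_{ac}$ versus $-2A_{ac}$). Beyond this, the computation is purely mechanical and no identity deeper than the corollary itself is required.
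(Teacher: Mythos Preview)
Your approach is exactly what the paper intends---contract the two identities of Cor.~3.1.1 with $g^{bd}$ and simplify---and the final results are correct. There is, however, a small slip in the index bookkeeping that you should clean up. Expanding $g_{[a[c}\hat{X}_{b]d]}$ with the paper's convention and contracting with $g^{bd}$ yields $2\hat{X}_{ac}$, not $2\hat{X}_{ca}$: the two $\delta$-contraction terms each give $-\hat{X}_{ac}$ and the $g_{bd}g^{bd}$ term gives $4\hat{X}_{ac}$. (You are implicitly importing the $d]b]$ ordering of Eq.~(\ref{Ruse-Lanczos}) rather than the $b]d]$ ordering actually used in Cor.~3.1.1.) In the symmetric case this is of course immaterial. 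In the antisymmetric case the right-hand side is then simply $-2\hat{A}_{ac}=-2A_{ac}$ directly, using only $A=0$ so that $\hat{A}_{ac}=A_{ac}$; no extra flip via $\hat{A}_{ca}=-A_{ac}$ is needed. As you have written it, your ``two signs combine'' would literally produce $(-1)\cdot 2\cdot(-A_{ac})=+2A_{ac}$, so the reasoning is internally inconsistent even though the conclusion is right. Fix the ordering in step~(ii)--(iii) and the argument goes through cleanly.
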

\noindent We notice that the first relation above reduces to the definition of the well-known Einstein tensor in the particular case $S_{abcd}\mapsto R_{abcd}$. Finally, contracting Eqs. (\ref{HDT}) gives us the second trace relation $\overleftrightarrow{\star}S=-S$. In particular, the latter shows us that not all possible invariants (scalars) constructed from a double form and its Hodge duals are independent. 

It is instructive to see how the generalised Ruse-Lanczos identity interacts with the irreducible decomposition provided by Eq. (\ref{Irreddec}). To start with, we notice that, in a four-dimensional spacetime, the invariant subspaces described in Section II have the following corresponding dimensions
\begin{equation}\label{irreducfour}
X_{abcd}=\underbrace{^{(1)}X_{abcd}}_{10}+\underbrace{^{(2)}X_{abcd}}_{9}+\underbrace{^{(3)}X_{abcd}}_{1}+\underbrace{^{(4)}X_{abcd}}_{1}+\underbrace{^{(5)}X_{abcd}}_{9}+\underbrace{^{(6)}X_{abcd}}_{6}.
\end{equation}
By applying Eq. (\ref{HD}) to each tensor above, we have
\begin{eqnarray}\label{RL1}
&&\overleftrightarrow{\star}\ ^{(1)}X_{abcd}=-^{(1)}X_{abcd},\\
&&\overleftrightarrow{\star}\ ^{(2)}X_{abcd}=+^{(2)}X_{abcd},\\
&&\overleftrightarrow{\star}\ ^{(3)}X_{abcd}=-^{(3)}X_{abcd},\\
&&\overleftrightarrow{\star}\ ^{(4)}X_{abcd}=-^{(4)}X_{abcd},\\
&&\overleftrightarrow{\star}\ ^{(5)}X_{abcd}=+^{(5)}X_{abcd},\\\label{RL2}
&&\overleftrightarrow{\star}\ ^{(6)}X_{abcd}=-^{(6)}X_{abcd},
\end{eqnarray}
which shows that the double dual, when restricted to each subspace of $\mathcal{D}^{2,2}$, works as an automorphism that acts either as the identity or a reflection about the origin. In particular, this means that the images have precisely the same symmetries as the corresponding domains. 

A $(2,2)$ double form is said to be \textit{self-dual} if it is an eigentensor of the map $\overleftrightarrow{\star}$ with eigenvalue $+1$ and \textit{anti-self-dual} if it is an eigentensor with eigenvalue $-1$. In the early days of general relativity, Rainich have proposed a splitting of the Riemann curvature tensor into two parts as follows \cite{Rainich}
\begin{equation}
X_{abcd}=\underbrace{\frac{1}{2}(X_{abcd}+\overleftrightarrow{\star} X_{abcd})}_{self-dual}+\underbrace{\frac{1}{2}(X_{abcd}-\overleftrightarrow{\star}X_{abcd})}_{anti-self-dual}.    
\end{equation}
This decomposition is at the roots of the so-called Rainich-Wheeler-Misner program for the unification of gravity with electromagnetism \cite{Wheeler} and, using Eqs. (\ref{RL1})-(\ref{RL2}) it is straightforward to show that each part above reads, respectively, as
\begin{equation}
\underbrace{^{(2)}X_{abcd}\ +\ ^{(5)}X_{abcd}}_{18},
\end{equation}
\begin{equation}
\underbrace{^{(1)}X_{abcd}\ +\ ^{(3)}X_{abcd}\ +\ ^{(4)}X_{abcd}\ +\ ^{(6)}X_{abcd}}_{18}.
\end{equation}
If $X_{abcd}$ is identified with the Riemann curvature tensor $R_{abcd}$, the self-dual part reduces to a total of $9$ independent components (traceless part of Ricci tensor) while the anti-self-dual part consists of the remaining $11$ (Weyl tensor plus scalar curvature).

We now turn our attention to the action of left and right duals on the space of double forms. This analysis is a bit more involved since the corresponding maps somehow mix the subspaces. To see this, we start by noticing that from Eqs. (\ref{RL1})-(\ref{RL2}), one has
\begin{eqnarray}\label{RL3}
&&\overleftarrow{\star}\ ^{(1)}X_{abcd}=+\overrightarrow{\star}\ ^{(1)}X_{abcd},\\
&&\overleftarrow{\star}\ ^{(2)}X_{abcd}=-\overrightarrow{\star}\ ^{(2)}X_{abcd},\\
&&\overleftarrow{\star}\ ^{(3)}X_{abcd}=+\overrightarrow{\star}\ ^{(3)}X_{abcd},\\
&&\overleftarrow{\star}\ ^{(4)}X_{abcd}=+\overrightarrow{\star}\ ^{(4)}X_{abcd},\\
&&\overleftarrow{\star}\ ^{(5)}X_{abcd}=-\overrightarrow{\star}\ ^{(5)}X_{abcd},\\\label{RL4}
&&\overleftarrow{\star}\ ^{(6)}X_{abcd}=+\overrightarrow{\star}\ ^{(6)}X_{abcd}.
\end{eqnarray}
The above relations imply that the target spaces of the left and right duals coincide, when restricted to each invariant subspace. Therefore, it suffices to look at the behavior of the map $\overleftarrow{\star}$. By carefully inspecting the algebraic symmetries of the l.h.s. above, we conclude that the subspaces of $\mbox{Sym}$ are mapped as follows\\
\begin{eqnarray}
&&\overleftarrow{\star}:\ ^{(s)}\mbox{Weyl}\ \rightarrow\   ^{(s)}\mbox{Weyl},\quad\quad\overleftarrow{\star}:\ ^{(s)}\mbox{KN1}\ \rightarrow\ ^{(a)}\mbox{Weyl},
\end{eqnarray}
\begin{eqnarray}
\overleftarrow{\star}:\ ^{(s)}\mbox{KN2}\ \rightarrow\ ^{(s)}\Lambda^{4}V^{*},\quad\quad \overleftarrow{\star}:\  ^{(s)}\Lambda^{4}V^{*}\ \rightarrow\ ^{(s)}\mbox{KN2},
\end{eqnarray}
\\
whereas the subspaces of $\mbox{Skew}$ have the following targets\\
\begin{eqnarray}
&&\overleftarrow{\star}:\ ^{(a)}\mbox{Weyl}\ \rightarrow\  ^{(s)}\mbox{KN1},\quad\quad\overleftarrow{\star}:\ ^{(a)}\mbox{KN1}\rightarrow\ ^{(a)}\mbox{KN1}.
\end{eqnarray}
The latter shows that, in general, the symmetries of the pieces are not preserved by the action of single Hodge dual. An interesting by-product of the above mappings is the unexpected correspondence between $^{(s)}\mbox{KN1}$ and $^{(a)}\mbox{Weyl}$ via duality. Indeed, since $\star$ creates an isomorphism between these subspaces, there exists a symmetric-traceless tensor $\hat{M}_{ab}$, such that
\begin{equation}
^{(a)}W_{abcd}=\overleftarrow{\star}\ g_{[a[c}\hat{M}_{b]d]}=\varepsilon_{abc}^{\phantom a\phantom a\phantom a q}\hat{M}_{dq}-\varepsilon_{abc}^{\phantom a\phantom a\phantom a q}\hat{M}_{dq},\quad\quad\quad \hat{M}_{ac}=\frac{1}{2}\overleftarrow{\star}\ ^{(a)}W_{ac}.
\end{equation}
Similar parametrizations have been derived in \cite{Obukhov1, Itin} and find its main applications in the study of skewon modifications to the Minkowski vacuum. We refer the reader to \cite{Senovilla1} where a detailed treatment of the Ruse-Lanczos identity in arbitrary dimension is given within the framework of $r$-fold forms.



\section{Bel-Matte decomposition}

We now investigate the splitting of a generic $(2,2)$ double form $X_{abcd}$ induced by a given vector $t^{q}$ at a spacetime point. A particular case of the splitting is originally due to Matte \cite{Matte} and Bel \cite{Bel1, Bel2} and gives rise to a set of lower order tensors with properties quite similar to the electric and magnetic fields. In the restricted case of the Riemann tensor, the set is to be identified with the so-called electrogravitic, magnetogravitic and topogravitic tensors. In order to obtain a generalised version of the Bel-Matte decomposition, we first consider a tensor of arbitrary rank, such that
\begin{equation}
Y^{\{\Omega\}}_{\phantom a\phantom a\phantom a\phantom a ab}=-Y^{\{\Omega\}}_{\phantom a\phantom a\phantom a\phantom a ba},
\end{equation}
with $\{\Omega\}$ schematically denoting an arbitrary number of additional lower and/or upper indices.  If we are given a vector of arbitrary character $t^{q}\in V$, we may construct the following projected tensors\footnote{Throughout, we use capital gothic letters such as $\mathfrak{A}$,$\mathfrak{B}$,$\mathfrak{C}$,$\mathfrak{D}$ to denote contractions of a tensor with a given vector.}
\begin{equation}
\mathfrak{E}^{\{\Omega\}}_{\phantom a\phantom a\phantom a\phantom a a}\equiv Y^{\{\Omega\}}_{\phantom a\phantom a\phantom a\phantom a ab}t^{b},\quad\quad\quad \mathfrak{H}^{\{\Omega\}}_{\phantom a\phantom a\phantom a\phantom a a}\equiv -(Y^{\{\Omega\}}_{\phantom a\phantom a\phantom a\phantom a ab}\star) t^{b},
\end{equation}
and, due to the skew symmetry of the pair $[ab]$, these tensors are automatically transversal to the vector, in the sense that
\begin{equation}
\mathfrak{E}^{\{\Omega\}}_{\phantom a\phantom a\phantom a\phantom a a}t^{a}=0,\quad\quad\quad \mathfrak{H}^{\{\Omega\}}_{\phantom a\phantom a\phantom a\phantom a a}t^{a}=0.
\end{equation}

\begin{prop}[Bel-Matte decomposition]
\label{BMdec}
With the above conventions, there follows the algebraic identity of projections
\begin{equation}\label{EHSP}
(t_{q}t^{q})Y^{\{\Omega\}ab}=(g^{ab}_{\phantom a\phantom a cd}\mathfrak{E}^{\{\Omega\}c}+\varepsilon^{ab}_{\phantom a\phantom a cd}\mathfrak{H}^{\{\Omega\}c})t^{d}.
\end{equation}
\end{prop}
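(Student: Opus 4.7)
The plan is to prove Eq.~(\ref{EHSP}) by direct computation. Since the collective index block $\{\Omega\}$ is a passive spectator throughout, the problem reduces to the two-index identity $t_q t^q\, Y^{ab}=(g^{ab}{}_{cd}\mathfrak{E}^c+\varepsilon^{ab}{}_{cd}\mathfrak{H}^c)t^d$ for any skew tensor $Y^{ab}$ and any vector $t^a$, which I will establish by unpacking the two terms on the right-hand side and adding them.

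First, I would unpack the $g$-piece. Using $g^{ab}{}_{cd}=\delta^a_c\delta^b_d-\delta^a_d\delta^b_c$ together with $\mathfrak{E}^c=Y^{ce}t_e$, a single line yields $g^{ab}{}_{cd}\mathfrak{E}^c t^d=2\mathfrak{E}^{[a}t^{b]}$. For the $\varepsilon$-piece I would substitute the defining relation $\mathfrak{H}^c=-\tfrac{1}{2}\varepsilon^{c}{}_{efg}Y^{fg}t^e$, obtaining a contraction of two Levi-Civita symbols that I would reduce via the Lorentzian identity $\varepsilon^{abcd}\varepsilon_{cefg}=-\delta^{abd}{}_{efg}$ recorded in Appendix~A. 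Expanding this generalised Kronecker delta by its $3\times 3$ determinantal formula produces six monomials; after contracting with $Y^{fg}t^e t^d$ and systematically using the skew-symmetry $Y^{fg}=-Y^{gf}$ together with $Y^{ab}t_b=\mathfrak{E}^a$, these collapse, after pairing, into three inequivalent structures: one proportional to $(t_q t^q)Y^{ab}$, one to $t^a\mathfrak{E}^b$, and one to $t^b\mathfrak{E}^a$.

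The final step is to add the two contributions. The $g$-piece supplies $\mathfrak{E}^a t^b-\mathfrak{E}^b t^a$, while the $\varepsilon$-piece supplies $t^a\mathfrak{E}^b-t^b\mathfrak{E}^a+(t_q t^q)Y^{ab}$; the four $\mathfrak{E}$--$t$ bilinears cancel exactly in pairs, leaving only $(t_q t^q)Y^{ab}$ on the right, as claimed. A built-in consistency check is that both sides manifestly annihilate under contraction with $t_a$ or $t_b$, as demanded by the transversality relations $\mathfrak{E}^a t_a=\mathfrak{H}^a t_a=0$ established just above the statement.

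I expect the main obstacle to be sign bookkeeping, not conceptual: one must track the Lorentzian sign in the $\varepsilon\varepsilon$ identity, the alternating signs in the expansion of the generalised Kronecker delta, and the repeated applications of the skew-symmetry of $Y$. A useful cross-check, and an alternative route that sidesteps the $\varepsilon\varepsilon$ gymnastics, is the $3+1$ viewpoint: introduce the projector $h^a{}_b=\delta^a_b-t^a t_b/(t_q t^q)$ orthogonal to $t$, split $Y^{ab}$ into a longitudinal part carried manifestly by $\mathfrak{E}$ and a purely transverse spatial $2$-form which, by the three-dimensional Hodge duality induced by $t$, is equivalent to the transverse vector $\mathfrak{H}$. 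In that language Eq.~(\ref{EHSP}) reappears simply as the standard reconstruction formula for a skew $2$-tensor from its electric and magnetic projections.
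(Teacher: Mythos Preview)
Your proof is correct and essentially identical to the paper's: both substitute the definition of $\mathfrak{H}$ into the $\varepsilon$-term, contract the two Levi-Civita tensors into a generalized Kronecker delta via Appendix~A, expand, and identify the residual $\mathfrak{E}^{[a}t^{b]}$ terms with the $g$-piece. One cosmetic caveat: the paper's unnormalized convention $A_{[ab]}\equiv A_{ab}-A_{ba}$ makes the $g$-piece equal to $\mathfrak{E}^{[a}t^{b]}$ rather than $2\mathfrak{E}^{[a}t^{b]}$, though your explicit monomial cancellation is unaffected.
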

\begin{proof}
Absorbing the Levi-Civita tensor appropriately in the r.h.s gives
\begin{eqnarray*}
\varepsilon^{ab}_{\phantom a\phantom a cd}\mathfrak{H}^{\{\Omega\}c}t^{d}&=&-\varepsilon^{ab}_{\phantom a\phantom a cd}(Y^{\{\Omega\}cs}\star)t_{s}t^{d}\\
&=&-\frac{1}{2}\varepsilon^{abcd}Y^{\{\Omega\}lm}\varepsilon_{lmcs}t^{s}t_{d}\\
&=&+\frac{1}{2}\delta^{abd}_{\phantom a\phantom a\phantom a lms}Y^{\{\Omega\}lm}t^{s}t_{d}\\
&=&(t_{q}t^{q})Y^{\{\Omega\}ab}-\mathfrak{E}^{\{\Omega\}[a}t^{b]}.
\end{eqnarray*}
Writing the anti-symmetrization operation in the last term in terms of $g_{abcd}$ and rearranging the equation then gives the desired result. 
\end{proof}

 To obtain the generalised decomposition, we shall apply Eq. (\ref{EHSP}) two times. We start by identifying the first pair of the double form $X_{abcd}$ with the collective index $\{\Omega\}$ to write
\begin{equation}\label{first}
(t_{q}t^{q})X_{abcd}=(g_{cd}^{\phantom a\phantom a rs}\mathfrak{E}_{abr}+\varepsilon_{cd}^{\phantom a\phantom a rs}\mathfrak{H}_{abr})t_{s},\quad\quad\mathfrak{E}_{abr}=X_{abrl}t^{l},\quad\quad \mathfrak{H}_{abr}=-\overrightarrow{\star}X_{abrl} t^{l}.
\end{equation}
Now, by identifying the index $r$ with $\{\Omega\}$ in $\mathfrak{E}_{abr}$ and $\mathfrak{H}_{abr}$, there follows
\begin{equation}\label{second}
(t_{q}t^{q})\mathfrak{E}_{abr}=(g_{ab}^{\phantom a\phantom a pq}\mathfrak{A}_{pr}+\varepsilon_{ab}^{\phantom a\phantom a pq}\mathfrak{C}_{pr})t_{q},\quad\quad(t_{q}t^{q})\mathfrak{H}_{abr}=(g_{ab}^{\phantom a\phantom a pq}\mathfrak{B}_{pr}+\varepsilon_{ab}^{\phantom a\phantom a pq}\mathfrak{D}_{pr})t_{q},
\end{equation}
where
\begin{equation}\label{third}
\mathfrak{A}_{ac}\equiv\mathfrak{E}_{abc}t^{b},\quad \mathfrak{C}_{ac}\equiv -\overleftarrow{\star} \mathfrak{E}_{abc}t^{b},\quad\quad\quad\mathfrak{B}_{ac}\equiv \mathfrak{H}_{abc}t^{b},\quad \mathfrak{D}_{ac}\equiv -\overleftarrow{\star} \mathfrak{H}_{abc}t^{b}.
\end{equation}
Assuming $t_{q}t^{q}\neq 0$, and using Eqs. (\ref{second}) and (\ref{third}) into (\ref{first}), yields
\begin{equation}\label{Beldec}
X_{abcd}=\{g_{abpq}(g_{cdrs}\mathfrak{A}^{pr}+\varepsilon_{cdrs}\mathfrak{B}^{pr})+\varepsilon_{abpq}(g_{cdrs}\mathfrak{C}^{pr}+\varepsilon_{cdrs}\mathfrak{D}^{pr})\}t^{q}t^{s}/(t_{m}t^{m})^{2},
\end{equation}
with the $2$-index tensors given by
\begin{equation}\label{Beldec1}
\mathfrak{A}_{ac}=X_{abcd}t^{b}t^{d},\quad \mathfrak{B}_{ac} =-\overrightarrow{\star}X_{abcd} t^{b}t^{d},\quad\mathfrak{C}_{ac}= -\overleftarrow{\star} X_{abcd} t^{b}t^{d},\quad \mathfrak{D}_{ac} =\overleftrightarrow{\star} X_{abcd} t^{b}t^{d}.
\end{equation}
The latter are orthogonal to the vector $t^{q}$ by construction and each of them has a total of $9$ independent components, as expected. Simple manipulations of Eq. (\ref{Beldec}) then give the more compact relation
\begin{equation}\label{Beldec*}
X_{abcd}=\{\mathfrak{A}_{[a[c}t_{b]d]}+\overrightarrow{\star}\mathfrak{B}_{[a[c}t_{b]d]}+\overleftarrow{\star}\mathfrak{C}_{[a[c}t_{b]d]}+\overleftrightarrow{\star}\mathfrak{D}_{[a[c}t_{b]d]}\}/(t_{m}t^{m})^{2},
\end{equation}
where anti-symetrization is supposed to act solely on the pair $[ab]$ and $[cd]$ individually and $t_{ab}\equiv t_{a}t_{b}$, for conciseness (see Appendix C). In this last form we identify four independent double forms constructed with Kulkarni-Nomizu products involving the $2$-index tensors. It can be verified that the corresponding subspaces are orthogonal in the sense that
\begin{equation}\label{invariant1}
I\equiv\frac{1}{4}X_{abcd}X^{abcd}=(\mathfrak{A}_{ac}\mathfrak{A}^{ac}-\mathfrak{B}_{ac}\mathfrak{B}^{ac}-\mathfrak{C}_{ac}\mathfrak{C}^{ac}+\mathfrak{D}_{ac}\mathfrak{D}^{ac})/(t_{m}t^{m})^{4}.
\end{equation}
In other words, only the squared norms of the individual double forms contribute to the total result. At this point, it is important to emphasize that Eqs. (\ref{Beldec}) and (\ref{Beldec*}) generalise the classical Bel-Matte decomposition in two independent ways:
\begin{itemize}
\item{it is valid for an arbitrary $(2,2)$ double form;}
\item{it is valid for an arbitrary non-lightlike vector.}
\end{itemize}
In most of physical applications, the auxiliary quantity $t^{q}$ is identified with a time-like, future-oriented and normalized vector. In this situation, the projected tensors become spacelike, and can be seen as linear operators acting on three-dimensional vectors. Hence, they satisy the inequalities
\begin{equation}
\mathfrak{A}_{ac}\mathfrak{A}^{ac}\geq 0,\quad\quad\mathfrak{B}_{ac}\mathfrak{B}^{ac}\geq 0,\quad\quad\mathfrak{C}_{ac}\mathfrak{C}^{ac}\geq 0,\quad\quad\mathfrak{D}_{ac}\mathfrak{D}^{ac}\geq 0.
\end{equation}
Let us notice, however, that the set of $2$-index tensors $\{\mathfrak{A},\mathfrak{B},\mathfrak{C},\mathfrak{D}\}$ is still well defined for a lightlike vector. However, in this case, Eqs. (\ref{Beldec}) and (\ref{Beldec*}) are no longer valid, since they somehow degenerate. We refer the reader to \cite{Senovilla2} where a wealth of details concerning the decomposition of more general tensors in $n$-dimensional spacetimes of Lorentzian signature are discussed. 
\section{Quadratic identities}

In the early seventies, Lovelock \cite{Lovelock} had pointed out that a number of apparently unrelated tensor identities well known to relativists had a common underlying algebraic structure. This class of identities were called by him dimensionally dependent identities and had its roots in antisymmetrising over $n+1$ indices in an $n$-dimensional space. More recently, Brian Edgar and Hoglund \cite{Edgar1} have generalised Lovelock's results to all trace-free $(k,l)$ forms and have applied their master identity to Maxwell, Lanczos, Ricci, Bel and Bel-Robinson tensors (see also \cite{Edgar2}). As a byproduct, they also demonstrate how relationships between scalar invariants (syzygies) of the Riemann tensor can be investigated in a systematic manner. A simple example of such identities in the case of a $n\times n$ matrix $M^{a}_{\phantom a b}$ is the following relation
\begin{equation}
M^{c_{1}}_{\phantom a\phantom a [c_{1}}M^{c_{2}}_{\phantom a\phantom a c_{2}}...M^{c_{n}}_{\phantom a\phantom a c_{n}}\delta^{a}_{\phantom a b]}=0,
\end{equation}
which is nothing but the Cayley-Hamilton theorem in a disguised form. Other similar relations in the case of $(2,2)$ double forms in four dimensions are
\begin{equation}
X^{ab}_{\phantom a\phantom a [ab}X^{cd}_{\phantom a\phantom a cd}X^{e}_{\phantom a e]}=0,\quad\quad\quad X^{ab}_{\phantom a\phantom a [ab}X^{cd}_{\phantom a\phantom a cd}X^{ef}_{\phantom a\phantom a ef]}=0.
\end{equation}
In this section, we briefly explore some quadratic identities which are somehow related to the above ones. We shall see that such relations are all particular instances of the following proposition. 
\begin{prop}[Fundamental quadratic identity]
Let $A^{\{\Omega\}}_{\phantom a\phantom a\phantom a ab}$ and $B^{\{\Upsilon\}}_{\phantom a\phantom a\phantom a ab}$ denote two tensors, with $\{\Omega\}$ and $\{\Upsilon\}$ standing for an arbitrary number of additional lower and/or upper indices and a pair of skew indices $[ab]$. Defining the ``trace-free'' auxiliary tensor
\begin{equation}
\hat{C}^{\{\Omega,\Upsilon\}}_{\phantom a\phantom a\phantom a\phantom a\phantom a ab}=A^{\{\Omega\}}_{\phantom a\phantom a\phantom a ac}B^{\{\Upsilon\}c}_{\phantom a\phantom a\phantom a \phantom a b}+\frac{1}{4}\left(A^{\{\Omega\}rs}B^{\{\Upsilon\}}_{\phantom a\phantom a\phantom a rs}\right)g_{ab},\quad\quad\quad \hat{C}^{\{\Omega,\Upsilon\}r}_{\phantom a\phantom a\phantom a\phantom a\phantom a\phantom a r}=0,
\end{equation}
there follows the fundamental quadratic identity
\begin{equation}\label{qfi}
\left(A^{\{\Omega\}}_{\phantom a\phantom a\phantom a ab}\star\right)\left(B^{\{\Upsilon\}cd}\star\right)+A^{\{\Omega\}cd}B^{\{\Upsilon\}}_{\phantom a\phantom a\phantom a ab}=-\delta^{[c}_{\phantom a [a}\hat{C}^{\{\Omega,\Upsilon\}d]}_{\phantom a\phantom a\phantom a\phantom a\phantom a\phantom a\phantom a b]}.
\end{equation}
\end{prop}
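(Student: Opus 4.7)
The plan is to derive the identity as a direct application of the generalised Ruse-Lanczos identity (Eq.~(\ref{HD})) to the composite tensor
\begin{equation*}
X^{\{\Omega,\Upsilon\}}_{abcd}\equiv A^{\{\Omega\}}_{ab}\,B^{\{\Upsilon\}}_{cd}.
\end{equation*}
By construction $X$ is skew in each of the pairs $[ab]$ and $[cd]$, so it is a $(2,2)$ double form in those two explicit blocks while the collective labels $\{\Omega\}$ and $\{\Upsilon\}$ act as passive spectators; this is legitimate because the Levi-Civita contractions used in the derivation of Eq.~(\ref{HD}) never touch indices outside the two skew blocks. The double dual factorises block-wise, giving $\overleftrightarrow{\star}X_{abcd}=(A_{ab}\star)(B_{cd}\star)$, while the transpose is simply $X_{cdab}=A_{cd}B_{ab}$.

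Next I would compute the traces of $X$: the first is $X_{ac}=g^{bd}A_{ab}B_{cd}=A_{a}^{\phantom{a}d}B_{cd}$ and the second is the scalar $X=A^{rs}B_{rs}$, so the hatted tensor reads $\hat{X}_{ac}=A_{a}^{\phantom{a}d}B_{cd}-\frac{1}{4}(A^{rs}B_{rs})g_{ac}$. Substitution into Eq.~(\ref{HD}) then yields
\begin{equation*}
(A_{ab}\star)(B_{cd}\star)+A_{cd}B_{ab}=g_{[a[c}\hat{X}_{d]b]}.
\end{equation*}
Raising the pair $(c,d)$ with $g^{cc'}g^{dd'}$ converts the metric-based Kulkarni-Nomizu bracket into the Kronecker-delta bracket $\delta^{[c}_{[a}\hat{X}^{d]}_{\phantom{d]}b]}$ and simultaneously promotes the transpose to the form $A^{cd}B_{ab}$ required by the statement.

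The only substantive check remaining is to reconcile $\hat{X}^{d}_{\phantom{d}b}$ with $-\hat{C}^{d}_{\phantom{d}b}$. Using the antisymmetry of $B$ one obtains $A^{d}_{\phantom{d}c}B^{c}_{\phantom{c}b}=-A^{de}B_{be}$, which both flips the sign of the matrix-product term and, as a consistency check, reproduces the trace-free identity $\hat{C}^{r}_{\phantom{r}r}=0$. The same flip explains why the trace piece carries opposite signs in $\hat{X}$ versus $\hat{C}$. The main obstacle is thus purely bookkeeping: aligning index positions and tracking the sign changes that occur when raising indices on antisymmetric tensors, so that the final expression matches both the tensorial structure and the normalisation of the Proposition on the nose.
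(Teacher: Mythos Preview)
Your argument is correct. The route, however, differs from the paper's: the paper's one-line proof tells the reader to redo the Kronecker-delta expansion from scratch on the product $\left(A_{ab}\star\right)\left(B^{cd}\star\right)=\tfrac{1}{4}\varepsilon_{abpq}\varepsilon^{cdrs}A^{pq}B_{rs}=-\tfrac{1}{4}\delta^{cdrs}_{\phantom{cdrs}abpq}A^{pq}B_{rs}$, exactly as was done in deriving Eq.~(\ref{Ruse-Lanczos}). You instead recognise that no new expansion is required: the identity is the Ruse--Lanczos relation, Eq.~(\ref{HD}), specialised to the factored double form $X_{abcd}=A^{\{\Omega\}}_{ab}B^{\{\Upsilon\}}_{cd}$, with the spectator indices $\{\Omega\},\{\Upsilon\}$ carried along passively. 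This makes explicit that Proposition~V.1 is logically a corollary of Proposition~III.1 rather than an independent computation. The underlying algebra is of course identical---both rest on $\varepsilon\varepsilon=-\delta^{(4)}$---but your packaging is more economical and exposes the relation between the two results. The sign reconciliation you flag, $\hat{X}_{db}=-\hat{C}_{db}$ via $B_{b}^{\phantom{b}r}=-B^{r}_{\phantom{r}b}$, is indeed the only non-mechanical step and it goes through as you describe.
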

\begin{proof}
The proof proceeds very much in the same way as in the case of Ruse-Lanczos identity. Simply write the duals above in terms of the generalized Kronecker delta and absorb it appropriately in the corresponding equation to obtain the remaining terms.
\end{proof}
\noindent We notice that the l.h.s. of Eq. (\ref{qfi}) contains no index summations. However, the roles of the skew pairs $[ab]$ and $[cd]$ are interchanged on the second term. Conversely, the r.h.s. of Eq. (\ref{qfi}) contains two types of contractions which emerge from the definition of the auxiliary tensor above: a first term involving a single index contraction and another involving two index contractions multiplied by the metric tensor. Furthermore, since the auxiliary tensor is trace-less, we have the following particular cases:

\begin{cor}
Contracting $b$ with $d$ in Eq. (\ref{qfi}) and rearranging the terms, there follows
\begin{equation}\label{emr}
\left(A^{\{\Omega\}}_{\phantom a\phantom a\phantom a ar}\star\right)\left(B^{\{\Upsilon\}cr}\star\right)-A^{\{\Omega\}cr}B^{\{\Upsilon\}}_{\phantom a\phantom a\phantom a ar}=-\frac{1}{2}\left(A^{\{\Omega\}rs}B^{\{\Upsilon\}}_{\phantom a\phantom a\phantom a rs}\right)\delta^{c}_{\phantom a a}. 
\end{equation}
\end{cor}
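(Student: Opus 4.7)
The plan is to contract the indices $b$ and $d$ directly in the fundamental quadratic identity (\ref{qfi}) and simplify the resulting right-hand side using the tracelessness of the auxiliary tensor together with the explicit value $m=4$ and the defining formula for $\hat{C}_{ab}$. No new structural input should be required beyond what is already packaged into Proposition 4.1.

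First I would expand the antisymmetrized combination $\delta^{[c}_{[a}\hat{C}^{d]}_{b]}$ on the right-hand side of (\ref{qfi}). Following the pair antisymmetrization convention (without normalization) used in the Ruse--Lanczos identity (\ref{HD}), this produces four terms. Contracting $d$ with $b$ and summing, the $\delta^{c}{}_{a}\hat{C}^{b}{}_{b}$ piece vanishes by the tracelessness $\hat{C}^{r}{}_{r}=0$; the two mixed contractions each collapse to $\hat{C}^{c}{}_{a}$; and the fully traced $\delta^{b}{}_{b}\hat{C}^{c}{}_{a}$ yields $4\,\hat{C}^{c}{}_{a}$ in four dimensions. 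Collecting these with the overall minus sign in (\ref{qfi}) gives $-2\,\hat{C}^{c}{}_{a}$ on the right-hand side.

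Next I would substitute the definition of $\hat{C}$ back in. Raising the first index of $\hat{C}_{ab}=A_{ac}B^{c}{}_{b}+\tfrac{1}{4}(A^{rs}B_{rs})g_{ab}$ gives $\hat{C}^{c}{}_{a}=A^{cr}B_{ra}+\tfrac{1}{4}(A^{rs}B_{rs})\delta^{c}{}_{a}$, and the skew-symmetry $B_{ra}=-B_{ar}$ of $B$ in its last pair turns this into $\hat{C}^{c}{}_{a}=-A^{cr}B_{ar}+\tfrac{1}{4}(A^{rs}B_{rs})\delta^{c}{}_{a}$. The contracted identity therefore reads $(A_{ar}\star)(B^{cr}\star)+A^{cr}B_{ar}=2A^{cr}B_{ar}-\tfrac{1}{2}(A^{rs}B_{rs})\delta^{c}{}_{a}$, and moving the single-contraction piece to the left-hand side is precisely the rearrangement referred to in the statement, producing (\ref{emr}).

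The only real obstacle is careful sign bookkeeping---tracking the minus signs introduced both by the pair antisymmetrization of the Kronecker deltas and by the skew-symmetry of $B$ in its last two indices---but the calculation is entirely mechanical once the expansion of $\delta^{[c}_{[a}\hat{C}^{d]}_{b]}$ is set up correctly and the tracelessness of $\hat{C}$ is used to eliminate the only term that would otherwise survive the contraction.
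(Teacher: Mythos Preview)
Your proposal is correct and follows exactly the route the paper indicates: the corollary is stated only as ``contract $b$ with $d$ and rearrange,'' and you have carried out precisely that contraction, using the tracelessness of $\hat{C}$ and then unpacking its definition together with the skew-symmetry of $B$ in its last pair. Your bookkeeping of the four terms in $\delta^{[c}_{\phantom a [a}\hat{C}^{d]}_{\phantom a\phantom a b]}$, the resulting $-2\hat{C}^{c}_{\phantom a a}$ on the right, and the final rearrangement to produce (\ref{emr}) are all accurate.
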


\begin{cor}
Substitution of $B^{\{\Upsilon\}}$ by $B^{\{\Upsilon\}}\star$ in the above equation gives
\begin{equation}\label{emr2}
\left(A^{\{\Omega\}}_{\phantom a\phantom a\phantom a ar}\star\right)B^{\{\Upsilon\}cr}+\left(B^{\{\Upsilon\}}_{\phantom a\phantom a\phantom a ar}\star\right)A^{\{\Omega\}cr}=+\frac{1}{2}\left(A^{\{\Omega\}rs}B^{\{\Upsilon\}}_{\phantom a\phantom a\phantom a rs}\star\right)\delta^{c}_{\phantom a a}. 
\end{equation}
\end{cor}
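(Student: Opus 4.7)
The plan is to obtain Eq.~(\ref{emr2}) directly from Eq.~(\ref{emr}) by carrying out the prescribed substitution $B^{\{\Upsilon\}}\mapsto B^{\{\Upsilon\}}\star$ on the skew pair that $B$ carries. Apart from Eq.~(\ref{emr}) itself, the only ingredient I need is the involution property $\star\circ\star=-1$ for a single Hodge dual on a skew pair in four-dimensional Lorentzian signature, which was recorded at the beginning of Section~III.

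First I would identify every occurrence of $B^{\{\Upsilon\}}$ in Eq.~(\ref{emr}): it appears once already dualised in the factor $\left(B^{\{\Upsilon\}cr}\star\right)$ on the left-hand side, once un-dualised in $B^{\{\Upsilon\}}_{\phantom a\phantom a\phantom a ar}$, and once in the contracted term $A^{\{\Omega\}rs}B^{\{\Upsilon\}}_{\phantom a\phantom a\phantom a rs}$ on the right. The substitution acts on the underlying tensor $B$, so in the first slot we pick up a double star $\left(B^{\{\Upsilon\}cr}\star\right)\star=-B^{\{\Upsilon\}cr}$ by the involution property, while the other two slots simply acquire a single star on their skew $[ab]$-block.

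Inserting these replacements into Eq.~(\ref{emr}) produces
\begin{equation*}
-\left(A^{\{\Omega\}}_{\phantom a\phantom a\phantom a ar}\star\right)B^{\{\Upsilon\}cr}-A^{\{\Omega\}cr}\left(B^{\{\Upsilon\}}_{\phantom a\phantom a\phantom a ar}\star\right)=-\frac{1}{2}\left(A^{\{\Omega\}rs}B^{\{\Upsilon\}}_{\phantom a\phantom a\phantom a rs}\star\right)\delta^{c}_{\phantom a a},
\end{equation*}
and multiplying through by $-1$ reproduces Eq.~(\ref{emr2}) after a harmless reordering of the two summands on the left-hand side.

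The only step that calls for attention is the bookkeeping for the repeated Hodge star: both applications must act on the same skew pair of $B$, namely the pair $[c,r]$ as it sits inside Eq.~(\ref{emr}). This is automatic here because the substitution rule refers to the canonical skew pair of $B^{\{\Upsilon\}}$ itself, so no cross-dualisation involving the $A$-slot arises. Beyond tracking the single minus sign produced by $\star\circ\star$, no genuine computation is involved; the corollary essentially articulates the fact that Eq.~(\ref{emr}) is invariant, up to an overall sign, under dualising its $B$-argument.
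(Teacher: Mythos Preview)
Your argument is correct and is exactly the computation the paper has in mind: the corollary is stated with no proof beyond the instruction ``Substitution of $B^{\{\Upsilon\}}$ by $B^{\{\Upsilon\}}\star$ in the above equation gives\ldots'', and you have simply carried out that substitution, using $\star\circ\star=-1$ and a global sign flip. There is nothing to add.
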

\noindent Quite often, the above relations may be used to simplify tensor expressions involving either pairs of $2$-forms, pairs of $(2,2)$ double forms or combinations of both. In what follows, we shall explore some consequences of these identities in simple particular situations.

\subsection{The case of $2$-forms}

In the study of the electromagnetic field one is concerned with the Faraday tensor $F_{ab}=-F_{ba}$ and its Hodge dual $\star F_{ab}=-\star F_{ba}$, also called the Maxwell tensor. From this pair of $2$-forms, one defines two independent algebraic invariants
\begin{equation}
\psi\equiv \frac{1}{2}F_{ab}F^{ab},\quad\quad\quad \phi\equiv \frac{1}{2}\star F_{ab}F^{cd}.
\end{equation}
Making the substitutions $A^{\{\Omega\}}_{\phantom a\phantom a\phantom a ab}\mapsto F_{ab}$ and $B^{\{\Omega\}}_{\phantom a\phantom a\phantom a ab}\mapsto F_{ab}$ in Proposition 1, one obtains the identity
\begin{equation}\label{emid1}
(\star F_{ab})(\star F^{cd})+F_{ab}F^{cd}=-\delta^{[c}_{\phantom a [a}T^{d]}_{\phantom a\phantom a b]}
\end{equation}
where the trace-less symmetric quantity
\begin{equation}
T_{ab}=F_{ar}F^{r}_{\phantom a b}+\frac{1}{2}\psi g_{ab}
\end{equation}
is nothing but the energy-momentum tensor of the electromagnetic field. We notice that the r.h.s. of Eq. (\ref{emid1}) is itself an element of the symmetric  subspace $^{(s)}\mbox{KN1}$, as defined in Section II. Indeed, it is symmetric under the interchange $ab \leftrightarrow cd$ and contains only the first trace, since the energy-momentum tensor is trace-less. Conversely, If we Hodge dualize any pair of skew indices above, we get
\begin{equation}\label{emid2}
\star F_{ab}F^{cd}-\star F^{cd}F_{ab}=\varepsilon_{ab}^{\phantom a\phantom a q[c}T^{d]}_{\phantom a q}, 
\end{equation}
which defines an anti-symmetric double form belonging to the subspace $^{(a)}\mbox{Weyl}$. Curiously enough, one hardly sees Eqs. (\ref{emid1}) and (\ref{emid2}) in textbooks of relativistic electrodynamics. The reason behind this is, we believe, that their derivations require some manipulations with the multi-indices described in Proposition V.1. An interesting feature of the above equations is that they are invariant under the dual rotation, defined by the map of the space of 2-forms onto itself
\begin{equation}\label{dualityr}
F_{ab}\quad\mapsto\quad\mbox{cos}(\theta)F_{ab}+\mbox{sin}(\theta)\star F_{ab},\quad\quad\quad \star F_{ab}\quad\mapsto\quad-\mbox{sin}(\theta)F_{ab}+\mbox{cos}(\theta)\star F_{ab},
\end{equation} 
where $\theta\in\mathbb{R}$. This fact becomes more transparent if we take the trace of Eq. (\ref{emid1}), to write the energy-momentum tensor as
\begin{equation}\label{emta}
T_{ab}=\frac{1}{2}\big(\star F_{ar}\star F^{r}_{\phantom a b}+F_{ar}F^{r}_{\phantom a b}\big).
\end{equation}
Indeed, direct substitution of Eq. (\ref{dualityr}) into Eq. (\ref{emta}) gives us the invariance of Eqs. (\ref{emid1}) and (\ref{emid2}) under dual rotations. It is remarkable that the energy-momentum tensor of the electromagnetic field emerges as a by-product of the algebraic identity Eq. (\ref{qfi}). 

Let us now move on to Corolaries V.1.1 and V.1.2. Putting $A^{\{\Omega\}}_{\phantom a\phantom a\phantom a ab}\mapsto F_{ab}$ and $B^{\{\Omega\}}_{\phantom a\phantom a\phantom a ab}\mapsto F_{ab}$ in Eqs. (\ref{emr}) and (\ref{emr2}), one gets the quadratic relations
\begin{equation}
(\star F^{ab})(\star F_{bc})- F^{ab} F_{bc}=\psi \delta^{a}_{\phantom a c}, 
\end{equation}
\begin{equation}
\star F^{ab}F_{bc}=-\frac{1}{2}\phi\delta^{a}_{\phantom a c}. 
\end{equation}
From the latter, it is possible to show also the third and quartic identities
\begin{eqnarray}
F^{a}_{\phantom a p}F^{p}_{\phantom a q}F^{q}_{\phantom a b}+\psi F^{a}_{\phantom a b}+\frac{1}{2}\phi\star F^{a}_{\phantom a b} &=&0,\\
F^{a}_{\phantom a p}F^{p}_{\phantom a q}F^{q}_{\phantom a r}F^{r}_{\phantom a b}+\psi F^{a}_{\phantom a p}F^{p}_{\phantom a b}-\frac{1}{4}\phi^{2}\delta^{a}_{\phantom a b}&=&0.
\end{eqnarray}
We notice that the last equation is nothing but a consequence of the Cayley-Hamilton theorem. In particular, it implies that $\mbox{det}(F^{a}_{\phantom a b})=-\phi^{2}/4$. The above relations have proved to be useful in the study of the algebraic features of the Faraday tensor. Let us now investigate similar relations satisfied by arbitrary $(2,2)$ double forms.



\subsection{The case of $(2,2)$ double forms}

To begin with, we let $X_{a_{1}a_{2}b_{1}b_{2}}$ and $Y_{c_{1}c_{2}d_{1}d_{2}}$ denote two arbitrary elements of $\mathcal{D}^{2,2}$. Making the appropriate substitutions in Proposition V.1, we obtain
\begin{equation}\label{quadridentity}
\left(\overrightarrow{\star}X^{a_{1}a_{2}}_{\phantom a\phantom a\phantom a\phantom a b_{1}b_{2}}\right)\left(\overrightarrow{\star}Y^{c_{1}c_{2}}_{\phantom a\phantom a\phantom a\phantom a d_{1}d_{2}}\right)+X^{a_{1}a_{2}}_{\phantom a\phantom a\phantom a\phantom a d_{1}d_{2}}Y^{c_{1}c_{2}}_{\phantom a\phantom a\phantom a\phantom a b_{1}b_{2}}=-g_{[d_{1}[b_{1}}\hat{C}^{a_{1}a_{2}c_{1}c_{2}}_{\phantom a\phantom a\phantom a\phantom a\phantom a\phantom a \phantom a\phantom a d_{2}]b_{2}]},
\end{equation}
where
\begin{equation}
\hat{C}^{a_{1}a_{2}c_{1}c_{2}}_{\phantom a\phantom a\phantom a\phantom a\phantom a\phantom a\phantom a\phantom a d_{2}b_{2}}=-X^{a_{1}a_{2}k_{1}}_{\phantom a\phantom a\phantom a\phantom a\phantom a\phantom a d_{2}}Y^{c_{1}c_{2}}_{\phantom a\phantom a\phantom a \phantom a k_{1}b_{2}}+\frac{1}{4}\left(X^{a_{1}a_{2}k_{1}k_{2}}Y^{c_{1}c_{2}}_{\phantom a\phantom a\phantom a\phantom a k_{1}k_{2}}\right)g_{d_{2}b_{2}}
\end{equation}
and (again) anti-symmetrization is supposed to act solely on the pairs
$[b_{1}b_{2}]$ and $[d_{1}d_{2}]$ individually. It is clear that the above relation is concerned with elements of the space $\mathcal{D}^{2,2}\otimes\mathcal{D}^{2,2}$ and that several new possible relations may be obtained by taking either combinations of Hodge duals or particular index permutations. 

To the best of our knowledge, Eq. (\ref{quadridentity}) does not appear in this form in previous literature. However, we shall see that it is at the roots of some known identities, such as those discussed by Lanczos \cite{Lanczos1}, Debever \cite{Debever1} and Novello-Duarte \cite{Novello}. In spite of the fact that these identities were originally obtained in the case of the Weyl tensor, we shall see that they remain valid for any \textit{irreducible} $(2,2)$ double form. In order to prove this, we start by restricting Eq. (\ref{quadridentity}) to the case of a single double form $X_{abcd}$. Contracting $a_{2}$ with $c_{2}$ and $b_{2}$ with $d_{2}$, we get
\begin{equation}\label{quad1}
\left(\overrightarrow{\star}X^{a_{1}\phantom a\phantom a b_{1}}_{\phantom a\phantom a k_{1}\phantom a\phantom a k_{2}}\right)\left(\overrightarrow{\star}X_{c_{1}\phantom a\phantom a d_{1}}^{\phantom a\phantom a k_{1}\phantom a\phantom a k_{2}}\right)-X^{a_{1}}_{\phantom a\phantom a k_{1}d_{1}k_{2}}X_{c_{1}}^{\phantom a k_{1}b_{1}k_{2}}=-\frac{1}{2}\left(X^{a_{1}k_{1}k_{2}k_{3}}X_{ c_{1}k_{1}k_{2}k_{3}}\right)\delta^{b_{1}}_{\phantom a d_{1}}.
\end{equation}
Further contractions then results in the following relations
\begin{equation}\label{trivial1}
\left(\overrightarrow{\star}X^{a_{1}}_{\phantom a\phantom a k_{1} k_{2}k_{3}}\right)\left(\overrightarrow{\star}X_{c_{1}}^{\phantom a\phantom a k_{1}k_{2}k_{3}}\right)=-X^{a_{1}}_{\phantom a\phantom a k_{1} k_{2}k_{3}}X_{c_{1}}^{\phantom a\phantom a k_{1}k_{2}k_{3}},\\
\end{equation}
\begin{equation}\label{quad2}
\left(\overrightarrow{\star}X_{k_{1}k_{2} \phantom a\phantom a k_{3}}^{\phantom a\phantom a\phantom a\phantom a b_{1}}\right)\left(\overrightarrow{\star}X^{k_{1}k_{2}\phantom a\phantom a k_{3}}_{\phantom a\phantom a\phantom a\phantom a d_{1}}\right)-X_{k_{1}k_{2}d_{1}k_{3}}X^{k_{1}k_{2}b_{1}k_{3}}=-2I\delta^{b_{1}}_{\phantom a d_{1}}.
\end{equation}
with the invariant defined as in Eq. (\ref{invariant1}). We now restrict ourselves to the particular case of irreducible $(2,2)$ double forms i.e., we assume that $X_{abcd}$ belongs to one of the subspaces as presented in Eq. (\ref{irreducfour}). We then get the following results

\begin{prop}[Generalised Lanczos identity]
Every irreducible (2,2) double form $X_{abcd}$ satisfies the relation
\begin{equation}
X^{a_{1}k_{1}k_{2}k_{3}}X_{b_{1}k_{1}k_{2}k_{3}}=I\delta^{a_{1}}_{\phantom a b_{1}},
\end{equation}

\end{prop}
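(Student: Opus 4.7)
The plan is to combine the general identity Eq.~(\ref{quad2}) with the contracted double-dual relation Eq.~(\ref{trivial1}) and with the pair-interchange (anti)symmetry enjoyed by every irreducible $(2,2)$ double form. The essential observation is that (\ref{quad2}) connects two quadratic contractions whose free indices sit in the \textit{second} skew pair (slot $3$), whereas the desired identity places its free indices in the \textit{first} pair (slot $1$); irreducibility is what bridges this gap.

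Concretely, I introduce the two three-index contractions
\begin{equation*}
T^{a_{1}}{}_{b_{1}} := X^{a_{1} k_{1} k_{2} k_{3}} X_{b_{1} k_{1} k_{2} k_{3}}, \qquad S^{a_{1}}{}_{b_{1}} := X_{k_{1} k_{2} b_{1} k_{3}} X^{k_{1} k_{2} a_{1} k_{3}},
\end{equation*}
and note that each of the six irreducible subspaces in~(\ref{Irreddec}) is an eigenspace of the interchange $ab \leftrightarrow cd$ with eigenvalue $\sigma = \pm 1$ (the $4$-form subspace included, where total antisymmetry still yields $\sigma = +1$). Hence, for irreducible $X$, $X_{abcd} = \sigma X_{cdab}$, and a direct relabeling of dummy indices with $\sigma^{2} = 1$ gives $T(X) = S(X)$. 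Moreover, the mappings listed after Eqs.~(\ref{RL3})--(\ref{RL4}) show that $\overrightarrow{\star}$ permutes the six irreducible subspaces among themselves, so $\overrightarrow{\star}X$ has a definite interchange sign of its own and the same argument yields $S(\overrightarrow{\star}X) = T(\overrightarrow{\star}X)$. Finally, Eq.~(\ref{trivial1}) supplies the key relation $T(\overrightarrow{\star}X) = -T(X)$.

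Reading the first term on the left-hand side of Eq.~(\ref{quad2}) as $S(\overrightarrow{\star}X)^{b_{1}}{}_{d_{1}}$ and the second as $S(X)^{b_{1}}{}_{d_{1}}$, the three facts above collapse~(\ref{quad2}) into
\begin{equation*}
-T(X)^{b_{1}}{}_{d_{1}} - T(X)^{b_{1}}{}_{d_{1}} = -2\, I \, \delta^{b_{1}}_{\phantom{b_{1}} d_{1}},
\end{equation*}
which immediately produces the stated formula $T(X)^{a_{1}}{}_{b_{1}} = I\,\delta^{a_{1}}_{\phantom{a_{1}} b_{1}}$.

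The main obstacle is not calculational but structural: the identification $S(X) = T(X)$ is what allows one to trade a second-pair contraction for a first-pair one, and it rests entirely on having a single interchange sign $\sigma$ — a property that fails for generic $X$ in which irreducible pieces of opposite $\sigma$ coexist. The companion statement $S(\overrightarrow{\star}X) = T(\overrightarrow{\star}X)$ relies on the fact that $\overrightarrow{\star}$ does not mix the subspaces, which is precisely the content of the classification around Eqs.~(\ref{RL3})--(\ref{RL4}). Care must therefore be taken in the write-up to invoke irreducibility in two distinct places rather than as a single blanket assumption.
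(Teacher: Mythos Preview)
Your proof is correct and follows exactly the route sketched in the paper: use that an irreducible $(2,2)$ double form, as well as each of its Hodge duals, carries a definite pair-interchange sign, then swap the skew pairs in Eq.~(\ref{quad2}) to rewrite both left-hand terms as first-slot contractions, and finally eliminate the dual term via Eq.~(\ref{trivial1}). Your introduction of the auxiliary contractions $T$ and $S$ and the explicit appeal to the subspace permutations after Eqs.~(\ref{RL3})--(\ref{RL4}) simply make the paper's two-line argument more transparent; no substantive difference is present.
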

\begin{proof}
An irreducible double form is either symmetric or antisymmetric, and the same holds for every of its possible Hodge duals. Using this property to interchange the skew pairs in Eq. (\ref{quad2}) and combining the latter with Eq. (\ref{trivial1}) then gives the desired result. 
\end{proof}

\begin{prop}[Generalised Debever identity]
Every irreducible (2,2) double form $X_{abcd}$ satisfies the relation
\begin{equation}
\left(\overrightarrow{\star}X^{a_{1}\phantom a\phantom a b_{1}}_{\phantom a\phantom a k_{1}\phantom a\phantom a k_{2}}\right)\left(\overrightarrow{\star}X_{c_{1}\phantom a\phantom a d_{1}}^{\phantom a\phantom a k_{1}\phantom a\phantom a k_{2}}\right)-X^{a_{1}}_{\phantom a\phantom a k_{1}d_{1}k_{2}}X_{c_{1}}^{\phantom a k_{1}b_{1}k_{2}}=-\frac{1}{2}I\delta^{a_{1}}_{\phantom a c_{1}}\delta^{b_{1}}_{\phantom a d_{1}}.
\end{equation}

\end{prop}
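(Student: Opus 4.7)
The strategy is to combine Eq.~(\ref{quad1}) with the Generalised Lanczos identity (Proposition~V.3), both of which are already at our disposal. Equation~(\ref{quad1}) is a direct consequence of the Fundamental Quadratic Identity (Proposition~V.1) obtained by contracting $a_{2}$ with $c_{2}$ and $b_{2}$ with $d_{2}$, so it holds for \emph{any} $(2,2)$ double form and in particular for an irreducible one. Its left-hand side is exactly the left-hand side of the Debever identity, which means the entire problem collapses to rewriting the single scalar factor $X^{a_{1}k_{1}k_{2}k_{3}}X_{c_{1}k_{1}k_{2}k_{3}}$ that multiplies $\delta^{b_{1}}_{\phantom a d_{1}}$ on the right of Eq.~(\ref{quad1}).

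This rewriting is precisely what the Generalised Lanczos identity provides: for irreducible $X_{abcd}$ one has $X^{a_{1}k_{1}k_{2}k_{3}}X_{b_{1}k_{1}k_{2}k_{3}}=I\,\delta^{a_{1}}_{\phantom a b_{1}}$, with $I=\tfrac{1}{4}X_{abcd}X^{abcd}$ as defined in Eq.~(\ref{invariant1}). Relabelling $b_{1}\mapsto c_{1}$ and inserting this into the right-hand side of Eq.~(\ref{quad1}) replaces $-\tfrac{1}{2}\bigl(X^{a_{1}k_{1}k_{2}k_{3}}X_{c_{1}k_{1}k_{2}k_{3}}\bigr)\delta^{b_{1}}_{\phantom a d_{1}}$ by $-\tfrac{1}{2}I\,\delta^{a_{1}}_{\phantom a c_{1}}\delta^{b_{1}}_{\phantom a d_{1}}$, which is the desired identity. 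The left-hand side is carried through unchanged, so the deduction is essentially a one-line substitution.

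The only point that requires care is to make sure the irreducibility hypothesis is deployed correctly. In Proposition~V.3 irreducibility was used to interchange the skew pairs $[ab]$ and $[cd]$ in Eq.~(\ref{quad2}) with a definite sign, which is what allowed the doubly-contracted scalar identity to be extracted. Here we do not need to repeat any such interchange, since we are only substituting the scalar consequence of Proposition~V.3 back into the partially-contracted Eq.~(\ref{quad1}); hence the very same hypothesis (that $X_{abcd}$ belongs to one of the six subspaces of Eq.~(\ref{irreducfour})) is sufficient. There is no genuine obstacle in the argument beyond bookkeeping of index positions.
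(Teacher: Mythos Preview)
Your proof is correct and follows exactly the paper's approach: substitute the Generalised Lanczos identity into Eq.~(\ref{quad1}) to convert its right-hand side into $-\tfrac{1}{2}I\,\delta^{a_{1}}_{\phantom a c_{1}}\delta^{b_{1}}_{\phantom a d_{1}}$. One small bookkeeping slip: the Generalised Lanczos identity is Proposition~V.2 in the paper's numbering, not Proposition~V.3 (the latter is the Debever identity you are proving).
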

\begin{proof}
Simply substitute the generalised Lanczos identity above into Eq. (\ref{quad1}) to get the desired result.
\end{proof}

 Finally, to generalise the quadratic identities presented by Novello and Duarte in \cite{Novello}, we first use Corollary V.1.2 and perform another contraction to write
\begin{equation}
\left(\overrightarrow{\star}X^{a_{1}\phantom a\phantom a b_{1}}_{\phantom a\phantom a k_{1}\phantom a\phantom a k_{2}}\right)\left(X_{c_{1}\phantom a\phantom a d_{1}}^{\phantom a\phantom a k_{1}\phantom a\phantom a k_{2}}\right)+X^{a_{1}}_{\phantom a\phantom a k_{1}d_{1}k_{2}}\overrightarrow{\star}X_{c_{1}}^{\phantom a k_{1}b_{1}k_{2}}=\frac{1}{2}\left(X^{a_{1}k_{1}k_{2}k_{3}}\overrightarrow{\star}X_{ c_{1}k_{1}k_{2}k_{3}}\right)\delta^{b_{1}}_{\phantom a d_{1}}.
\end{equation}
Simple manipulations then give, for an irreducible (2,2) form, the ``mixed'' relations
\begin{equation}
\overrightarrow{\star}X^{a_{1}k_{1}k_{2}k_{3}}X_{b_{1}k_{1}k_{2}k_{3}}=J\delta^{a_{1}}_{\phantom a b_{1}},
\end{equation}
\begin{equation}
\left(\overrightarrow{\star}X^{a_{1}\phantom a\phantom a b_{1}}_{\phantom a\phantom a k_{1}\phantom a\phantom a k_{2}}\right)\left(X_{c_{1}\phantom a\phantom a d_{1}}^{\phantom a\phantom a k_{1}\phantom a\phantom a k_{2}}\right)+X^{a_{1}}_{\phantom a\phantom a k_{1}d_{1}k_{2}}\overrightarrow{\star}X_{c_{1}}^{\phantom a k_{1}b_{1}k_{2}}=2J\delta^{a_{1}}_{\phantom a c_{1}}\delta^{b_{1}}_{\phantom a d_{1}},
\end{equation}
with
\begin{equation}
J\equiv \frac{1}{4}X^{k_{1}k_{2}k_{3}k_{4}}\overrightarrow{\star}X_{ k_{1}k_{2}k_{3}k_{4}},
\end{equation}
which completes our claim. It is worth emphasizing that Novello and Duarte obtain these identities taking full advantage of the traceless property of the Weyl tensor. However, our approach shows that this assumption is not necessary.

\section{Conclusions}

The net result of this paper is the derivation/discussion of several algebraic relations identically satisfied by generic $(2,2)$ double forms. Although some of these identities are known to hold in the particular cases of Weyl and/or Riemann tensors, their extensions to the general situation do not appear in a self-contained way in previous literature. In order to make our discussion as general as possible, we have started with a mathematical setting in arbitrary dimensions
and have introduced some non-standard terminology. We then discussed the internal symmetries of these tensors, characterised the properties of their invariant sub-spaces and have emphasized the naturalness of Kulkarni-Nomizu products in this context. A key ingredient in our subsequent discussion is the fact that the space of (2,2) double forms splits into an orthogonal direct sum of six invariant subspaces. 

Restricting ourselves to the case of a four-dimensional spacetime, we have introduced the concept of Hodge duality operators and carefully analysed the action of the corresponding maps on the invariant subspaces. An important consequence of this analysis is the fact that Hodge star operators map irreducible pieces into irreducible pieces. As stressed in the text, this result is a direct consequence of the generalised version of the Ruse-Lanczos identity which permits a direct characterization of the corresponding target spaces. We then generalised other important relations such as the the Bel-Matte decomposition of a $(2,2)$ double form into space-like pieces and the Lovelock-like quadratic identities. In particular, we hope that the latter may be of crucial importance in the studies of i) covariant hyperbolizations of electrodynamics inside media; ii) alternative derivations of the Tamm-Rubilar tensor using traditional methods; iii) algebraic classification of local constitutive tensors. We shall investigate some of these issues in a forthcoming communication.


\appendix
\section{Generalized Kronecker deltas and Levi-Civita tensors}\label{App1}

The generalized Kronecker delta of order $2k$ is defined by the $k\times k$ multi-linear determinant
\begin{equation}
\delta^{a_{1}...a_{k}}_{\phantom a\phantom a\phantom a\phantom a \phantom a \phantom a b_{1}...b_{k}}\equiv \mbox{det}\left(\begin{array}{cccc}
\delta^{a_{1}}_{\phantom a\phantom a b_{1}}& \delta^{a_{1}}_{\phantom a\phantom a b_{2}}&\cdots &\delta^{a_{1}}_{\phantom a\phantom a b_{k}}\\
\delta^{a_{2}}_{\phantom a\phantom a b_{1}}&\delta^{a_{2}}_{\phantom a\phantom a b_{2}}&\cdots &\delta^{a_{2}}_{\phantom a\phantom a b_{k}}\\
\vdots &\vdots &\ddots &\vdots \\
\delta^{a_{k}}_{\phantom a\phantom a b_{1}}&\delta^{a_{k}}_{\phantom a\phantom a b_{2}}&\cdots &\delta^{a_{k}}_{\phantom a\phantom a b_{k}}\\
\end{array}\right).
\end{equation}
Using the Laplace expansion of determinant, it may be defined recursively and the pattern of the first few terms read as follows
\begin{eqnarray*}
&& \delta^{a_{1}a_{2}}_{\phantom a\phantom a\phantom a\phantom a b_{1}b_{2}}=\delta^{a_{1}}_{\phantom a\phantom a b_{1}}\delta^{a_{2}}_{\phantom a\phantom a b_{2}}-\delta^{a_{1}}_{\phantom a\phantom a b_{2}}\delta^{a_{2}}_{\phantom a\phantom a b_{1}},\\
&& \delta^{a_{1}a_{2}a_{3}}_{\phantom a\phantom a\phantom a\phantom a\phantom a\phantom a b_{1}b_{2}b_{3}}=\delta^{a_{1}}_{\phantom a\phantom a b_{1}}\delta^{a_{2}a_{3}}_{\phantom a\phantom a\phantom a\phantom a b_{2}b_{3}}+\delta^{a_{1}}_{\phantom a\phantom a b_{2}}\delta^{a_{2}a_{3}}_{\phantom a\phantom a\phantom a\phantom a b_{3}b_{1}}+\delta^{a_{1}}_{\phantom a\phantom a b_{3}}\delta^{a_{2}a_{3}}_{\phantom a\phantom a\phantom a\phantom a b_{1}b_{2}},\\
&& \delta^{a_{1}a_{2}a_{3}a_{4}}_{\phantom a\phantom a\phantom a\phantom a\phantom a\phantom a\phantom a\phantom a b_{1}b_{2}b_{3}b_{4}}=\delta^{a_{1}}_{\phantom a\phantom a b_{1}}\delta^{a_{2}a_{3}a_{4}}_{\phantom a\phantom a\phantom a\phantom a\phantom a\phantom a b_{2}b_{3}b_{4}}-\delta^{a_{1}}_{\phantom a\phantom a b_{2}}\delta^{a_{2}a_{3}a_{4}}_{\phantom a\phantom a\phantom a\phantom a\phantom a\phantom a b_{3}b_{4}b_{1}}+\delta^{a_{1}}_{\phantom a\phantom a b_{3}}\delta^{a_{2}a_{3}a_{4}}_{\phantom a\phantom a\phantom a\phantom a\phantom a\phantom a b_{4}b_{1}b_{2}}-\delta^{a_{1}}_{\phantom a\phantom a b_{4}}\delta^{a_{2}a_{3}a_{4}}_{\phantom a\phantom a\phantom a\phantom a\phantom a\phantom a b_{1}b_{2}b_{3}}.
\end{eqnarray*}
Given a covariant tensor of the type $Y_{a_{1}...a_{k}}$, total anti-symmetrization is defined by means of square brackets as
\begin{equation}
Y_{[a_{1}...a_{k}]}\equiv \delta^{b_{1}...b_{k}}_{\phantom a\phantom a\phantom a\phantom a \phantom a \phantom a a_{1}...a_{k}}Y_{b_{1}...b_{k}}.
\end{equation}
Throughout, the four-dimensional Levi-Civita tensors are given by
\begin{equation}
\varepsilon_{abcd}\equiv\sqrt{-g}[abcd],\quad\quad\quad \varepsilon^{abcd}\equiv-\frac{1}{\sqrt{-g}}[abcd],
\end{equation}
with $g\equiv\mbox{det}(g_{ab})$ and $[abcd]$ the totally antisymmetric symbol, with $[0123]=+1$. The fundamental relation between these tensors reads as
\begin{equation}
\varepsilon^{a_{1}a_{2}a_{3}a_{4}}\varepsilon_{b_{1}b_{2}b_{3}b_{4}}=-\delta^{a_{1}a_{2}a_{3}a_{4}}_{\phantom a\phantom a\phantom a\phantom a\phantom a\phantom a\phantom a\phantom a b_{1}b_{2}b_{3}b_{4}}.
\end{equation}

\section{Kulkarni-Nomizu products}
Let $T_{2}$ denote the space of covariant rank-2 tensors at a spacetime point. If $\boldsymbol{k}, \boldsymbol{h}\in T_{2}$, the Kulkarni-Nomizu product of $\boldsymbol{k}$ and $\boldsymbol{h}$ is a symmetric map $\KN:T_{2}\times T_{2}\rightarrow \mathcal{D}^{2,2}$, defined by\\
\begin{eqnarray*}
(\boldsymbol{k} \KN \boldsymbol{h})_{abcd}\equiv k_{[a[c}h_{b]d]}=k_{ac}h_{bd}-k_{ad}h_{bc}+k_{bd}h_{ac}-k_{bc}h_{ad}.
\end{eqnarray*}\\
A direct consequence of the definition is that
\begin{equation}
(\boldsymbol{k} \KN \boldsymbol{h})_{abcd}=(\boldsymbol{h} \KN \boldsymbol{k})_{abcd},\quad\quad\quad(\boldsymbol{k} \KN \boldsymbol{k})_{abcd}=2k_{a[c}k_{bd]},
\end{equation}
\begin{equation}
(\boldsymbol{k} \KN \boldsymbol{h})_{a[pqr]}=-2(k_{ap}h_{[qr]}+k_{aq}h_{[rp]}+k_{ar}h_{[pq]}+h_{ap}k_{[qr]}+h_{aq}k_{[rp]}+h_{ar}k_{[pq]}).
\end{equation}
More specifically, if we consider the decomposition $T_{2}=\mbox{Sym}(T_{2})\oplus \mbox{Skew}(T_{2})$, we have the following possibilities for the corresponding images\\

\begin{center}
  \begin{tabular}{ | l || c | r | }
    \hline 
    \diagbox[innerleftsep=.7cm, innerrightsep=.5cm,width=6em]{$\boldsymbol{k}$}{$\boldsymbol{h}$} & $\mbox{Sym}(T_2)$ & $\mbox{Skew}(T_2)\ \ $\\
\hline\hline
$\ \ \ \mbox{Sym}(T_{2})$ & $\ \mbox{Curv}(\mathfrak {D}^{2,2})\ $ & $\ \mbox{Skew}(\mathfrak {D}^{2,2})\ $ \\
\hline
$\ \ \ \mbox{Skew}(T_2)$ & $\ \mbox{Skew}(\mathfrak {D}^{2,2})\ $ & $\ \mbox{Sym}(\mathfrak {D}^{2,2})\ $ \\
    \hline
  \end{tabular}
\end{center}

As usual, the bi-metric tensor is denoted by 
\begin{equation}
g_{abcd}\equiv g_{a[c}g_{bd]}=g_{ac}g_{bd}-g_{ad}g_{cd},\quad\quad\quad g_{abcd}=g_{cdab},
\end{equation}
and some relations involving contractions of the latter read as
\begin{eqnarray}
&&g^{bd}(g_{abpq}g_{cdrs})=g_{aspq}g_{cr}-g_{arpq}g_{cs},\\
&&g^{bd}(g_{abpq}\varepsilon_{cdrs})=\varepsilon_{cqrs}g_{ap}-\varepsilon_{cprs}g_{aq},\\
&&g^{bd}(\varepsilon_{abpq}g_{cdrs})=\varepsilon_{aspq}g_{cr}-\varepsilon_{arpq}g_{cs}.
\end{eqnarray}

\section{Matrix displays}

If we define collective indices $\alpha,\beta$ taking values in the
set
\begin{equation}
1\rightarrow [01]\quad 2\rightarrow [02]\quad3\rightarrow [03]\quad 4\rightarrow [32]\quad 5\rightarrow [13]\quad 6\rightarrow [21],
\end{equation}
Eqs. (\ref{Beldec}) and (\ref{Beldec*}) permit us to represent the $(2,2)$ double form by a $6\times 6$ block matrix called the \textit{Petrov equivalent} of $X_{abcd}$. This is generally done by choosing an orthonormal frame with a convenient orientation, such that the timelike vector is identified with the first leg of the tetrad. A direct calculation then gives 
\begin{equation}
 X_{abcd}\quad\leftrightarrow\quad X_{\alpha\beta}= 
  \begin{pmatrix}
    \mathfrak{A} & \mathfrak{B}\\
    \mathfrak{C} & \mathfrak{D}
  \end{pmatrix},\quad\quad\quad X_{cdab}\quad\leftrightarrow\quad X_{\beta\alpha}= 
  \begin{pmatrix}
    \mathfrak{A}^{T} & \mathfrak{C}^{T}\\
    \mathfrak{B}^{T} & \mathfrak{D}^{T}
  \end{pmatrix},
\end{equation}
with the quantities $\{\mathfrak{A},\mathfrak{B},\mathfrak{C},\mathfrak{D}\}$ denoting the covariant $3\times 3$ matrices constructed with the projected tensors Eqs. (\ref{Beldec1}) in the obvious way. In particular, for the following Petrov equivalents
\begin{equation}
g_{abcd}\quad\leftrightarrow\quad g_{\alpha\beta},\quad g^{abcd}\quad\leftrightarrow\quad g^{\alpha\beta},\quad\delta^{ab}_{\phantom a\phantom a cd}\quad\leftrightarrow\quad\delta^{\alpha}_{\phantom a\beta},\quad \delta_{ab}^{\phantom a\phantom a cd}\quad\leftrightarrow\quad\delta_{\alpha}^{\phantom a\beta},
\end{equation}
we obtain
\begin{equation}
g_{\alpha\beta}= 
  \begin{pmatrix}
    -\boldsymbol{1} & \ \ \boldsymbol{0}\\
    \ \ \boldsymbol{0} & +\boldsymbol{1}
  \end{pmatrix},\quad\quad g^{\alpha\beta}= 
  \begin{pmatrix}
    -\boldsymbol{1} & \ \ \boldsymbol{0}\\
    \ \ \boldsymbol{0} & +\boldsymbol{1}
  \end{pmatrix},\quad\quad  \delta^{\alpha}_{\phantom a\beta}=\begin{pmatrix}
    +\boldsymbol{1} & \ \ \boldsymbol{0}\\
    \ \ \boldsymbol{0} & +\boldsymbol{1}
  \end{pmatrix},\quad\quad\quad \delta_{\alpha}^{\phantom a\beta}=\begin{pmatrix}
    +\boldsymbol{1} & \ \ \boldsymbol{0}\\
    \ \ \boldsymbol{0} & +\boldsymbol{1}
  \end{pmatrix},
\end{equation}
with $\boldsymbol{1}$ and $\boldsymbol{0}$ denoting the corresponding $3\times 3$ identity and zero matrices. Since $g_{\alpha\beta}$ plays the role of a metric in the space of bi-vectors we call it simply by bi-metric\footnote{A six-dimensional vector space with metric signature $(3,3)$ is often referred in literature as the Klein space. Consequently, double (2,2) forms may be thought as rank-2 covariant tensors in this space \cite{Pirani}.}. Similarly, for the Levi-Civita tensors, there follow
\begin{equation}
\varepsilon_{\alpha\beta}= 
  \begin{pmatrix}
    \ \ \boldsymbol{0} & -\boldsymbol{1}\\
    -\boldsymbol{1} & \ \ \boldsymbol{0}
  \end{pmatrix},\quad\quad\quad\varepsilon^{\alpha}_{\phantom a\beta}= 
  \begin{pmatrix}
    \ \ \boldsymbol{0} & +\boldsymbol{1}\\
    -\boldsymbol{1} & \ \ \boldsymbol{0}
  \end{pmatrix},\quad\quad\quad \varepsilon_{\alpha}^{\phantom a\beta}= 
  \begin{pmatrix}
    \ \ \boldsymbol{0} & -\boldsymbol{1}\\
    +\boldsymbol{1} & \ \ \boldsymbol{0}
  \end{pmatrix},\quad\quad\quad
\varepsilon^{\alpha\beta}=
  \begin{pmatrix}
    \ \ \boldsymbol{0} & +\boldsymbol{1}\\
    +\boldsymbol{1} & \ \ \boldsymbol{0}
  \end{pmatrix},\end{equation}
 and the fundamental relation concerning the above quantities reads as
\begin{equation}
  \frac{1}{2}g^{abpq}g_{pqcd}=-\frac{1}{2}\varepsilon^{abpq}\varepsilon_{pqcd}=\delta^{ab}_{\phantom a\phantom acd}\quad\leftrightarrow\quad  g^{\alpha\gamma}g_{\gamma\beta}=-\varepsilon^{\alpha\gamma}\varepsilon_{\gamma\beta}=\delta^{\alpha}_{\phantom a\beta}.
\end{equation}

An immediate advantage of using Petrov equivalents is that initially complicated tensorial manipulations become simple matrix relations. In particular, the left and right Hodge duals read as
\begin{equation}
\star X_{abcd}=\frac{1}{2}\varepsilon_{ab}^{\phantom a\phantom a pq}X_{pqcd}\quad\leftrightarrow\quad \star X_{\alpha\beta}= \varepsilon_{\alpha}^{\phantom a\lambda}X_{\lambda\beta}=
  \begin{pmatrix}
    -\mathfrak{C} & -\mathfrak{D}\\
    \ \ \mathfrak{A} & \ \  \mathfrak{B}\end{pmatrix},   
\end{equation}
\begin{equation}
 X_{abcd}\star=\frac{1}{2}\varepsilon^{pq}_{\phantom a\phantom a cd}X_{abpq}\quad\leftrightarrow\quad X_{\alpha\beta}\star= X_{\alpha\lambda}\varepsilon^{\lambda}_{\phantom a\beta}=
   \begin{pmatrix}
    -\mathfrak{B} & \mathfrak{A}\\
    -\mathfrak{D} & \mathfrak{C}
  \end{pmatrix},
\end{equation}
 from which it become clear that left-dualization is a transformation on the rows and right-dualization is a transformation on the columns. Combining the above transformations, one easily obtains the double dual in the form
 \begin{equation}
\star X_{abcd}\star\quad\leftrightarrow\quad \star X_{\alpha\beta}\star= 
  \begin{pmatrix}
    \ \ \mathfrak{D} & -\mathfrak{C}\\
    -\mathfrak{B} & \ \ \mathfrak{A}
  \end{pmatrix}.
\end{equation}
Also, by noticing that the inverse of $\varepsilon^{\alpha}_{\phantom a\beta}$ is minus itself, we write 
\begin{equation}
\star X^{\alpha}_{\phantom a\beta}=-\varepsilon^{\alpha}_{\phantom a\mu}(X^{\mu}_{\phantom a\nu}\star)\varepsilon^{\nu}_{\phantom a\beta},\quad\quad\quad \star X^{\alpha}_{\phantom a\beta}\star=-\varepsilon^{\alpha}_{\phantom a\mu}(-X^{\mu}_{\phantom a\nu})\varepsilon^{\nu}_{\phantom a\beta},
\end{equation}
which shows that the corresponding tensors are related by similarity transformations. We shall see next that the above relations somehow simplify the analysis of the independent invariants which may be constructed with the double form and its duals.

Another nice feature of the present formalism manifests itself in the description of the irreducible decomposition provided by Eq. (\ref{Irreddec}). From the latter, we have the Petrov equivalents
\begin{equation}
X_{\alpha\beta}\ =\ ^{(1)}X_{\alpha\beta}\ +\ ^{(2)}X_{\alpha\beta}\ +\ ^{(3)}X_{\alpha\beta}\ +\ ^{(4)}X_{\alpha\beta}\ +\ ^{(5)}X_{\alpha\beta}\ +\ ^{(6)}X_{\alpha\beta}\,     
\end{equation}
and, since each subspace is characterized by four $3\times 3$ matrices, we are lead to investigate a total of six independent sets $\{\ ^{(i)}\mathfrak{A},\ ^{(i)}\mathfrak{B}\,\ ^{(i)}\mathfrak{C},\ ^{(i)}\mathfrak{D}\}$ with $i=1,2,...,6$. When combined with relations Eqs. (\ref{RL1})-(\ref{RL2}) and Eqs. (\ref{RL3})-(\ref{RL4}) for the Hodge duals, we realise that the elements of each set have specific algebraic properties. In particular, for the curvaturelike part, we obtain
\begin{equation}
^{(1)}X_{\alpha\beta}= \begin{pmatrix}
    ^{(1)}\mathfrak{A} & \ \ \ ^{(1)}\mathfrak{B}\\
    ^{(1)}\mathfrak{B} & - ^{(1)}\mathfrak{A}
  \end{pmatrix},\quad\quad ^{(2)}X_{\alpha\beta}= \begin{pmatrix}
    \ \ ^{(2)}\mathfrak{A} & \ \ ^{(2)}\mathfrak{B}\\
    -^{(2)}\mathfrak{B} &  \ \ ^{(2)}\mathfrak{A} 
  \end{pmatrix},\quad\quad
  ^{(3)}X_{\alpha\beta}=\frac{X}{12} \begin{pmatrix}
    -\boldsymbol{1} & \ \ \boldsymbol{0}\\
    \ \ \boldsymbol{0} & +\boldsymbol{1}
  \end{pmatrix},
\end{equation}
with $^{(1)}\mathfrak{A}$ and $^{(1)}\mathfrak{B}$ symmetric traceless, $^{(2)}\mathfrak{A}$ symmetric and $^{(2)}\mathfrak{B}$ antisymmetric totalizing with the second trace $20$ independent components. The four-form term is naturally proportional to the Levi-Civita tensor and is easily obtained as
\begin{equation}
^{(4)}X_{\alpha\beta}= \frac{\star X}{12}\begin{pmatrix}
    \ \ \boldsymbol{0} & +\boldsymbol{1}\\
    +\boldsymbol{1} & \ \ \boldsymbol{0}
  \end{pmatrix}.
\end{equation}
From the latter one sees that it is completely determined by the second trace of either the left or right Hodge dual. Finally, for the antisymmetric part, there follow
\begin{equation}
^{(5)}X_{\alpha\beta}= \begin{pmatrix}
    \ \ ^{(5)}\mathfrak{A} & \ \ \ ^{(5)}\mathfrak{B}\\
    -^{(5)}\mathfrak{B} &  \ \ ^{(5)}\mathfrak{A}
  \end{pmatrix},\quad\quad\quad ^{(6)}X_{\alpha\beta}= \begin{pmatrix}
    ^{(6)}\mathfrak{A} & \ \ \ ^{(6)}\mathfrak{B}\\
    ^{(6)}\mathfrak{B} &  \ \ ^{(6)}\mathfrak{A}
  \end{pmatrix},
\end{equation}
with $^{(5)}\mathfrak{A}$ antisymmetric, $^{(5)}\mathfrak{B}$ symmetric, $^{(6)}\mathfrak{A}$ and $^{(6)}\mathfrak{B}$ antisymmetric totalizing the remaining $15$ components.

\section{Main invariants}

The characteristic polynomial associated to $X^{\alpha}_{\phantom a\beta}$ reads as
\begin{equation}
\chi(\lambda)=\mbox{det}(X^{\alpha}_{\phantom a\beta}-\lambda\delta^{\alpha}_{\phantom a\beta})=\sum\limits_{k=0}^{6}\sigma_{k}(-\lambda)^{6-k}, \end{equation}
where $\sigma_{k}$ is the $k$-th elementary symmetric polynomial, given by
\begin{equation}
    \sigma_{k}\equiv\frac{1}{k!}\delta^{\alpha_{1}...\alpha_{6}}_{\phantom a\phantom a\phantom a\phantom a\phantom a\phantom a\beta_{1}...\beta_{6}}X^{\beta_{1}}_{\phantom a\phantom a\alpha_{1}}...X^{\beta_{k}}_{\phantom a\phantom a\alpha_{k}},
\end{equation}
with $\sigma_{0}\equiv1$. The algebraic complexity of the double form is directly related to the number of non-vanishing such polynomials and a tedious, but direct, calculation gives
\begin{eqnarray}
&&\sigma_{1}=\frac{1}{1!}[X],\\
&&\sigma_{2}=\frac{1}{2!}([X]^{2}-[X^{2}]),\\
&&\sigma_{3}=\frac{1}{3!}([X]^{3}-3[X][X^{2}]+2[X^{3}]),\\
&&\sigma_{4}=\frac{1}{4!}([X]^{4}-6[X]^{2}[X^{2}]+3[X^{2}]^{2}+8[X][X^{3}]-6[X^{4}]),\\
&&\sigma_{5}=\frac{1}{5!}([X]^{5}-10[X]^{3}[X^{2}]+15[X][X^{2}]^{2}+20[X]^{2}[X^{3}]-20[X^{2}][X^{3}]\\\nonumber
&&\quad\quad\quad\quad\quad\quad\quad\quad\quad\quad\quad\quad\quad-30[X][X^{4}]+24[X^{5}]),\\
&&\sigma_{6}=\frac{1}{6!}([X]^{6}-15[X]^{4}[X^{2}]+40[X]^{3}[X^{3}]+45[X]^{2}[X^{2}]^{2}-15[X^{2}]^{3}-120[X][X^{2}][X^{3}]\\\nonumber
&&\quad\quad\quad\quad\quad\quad\quad-90[X]^{2}[X^{4}]+40[X^{3}]^{2}+90[X^{2}][X^{4}]+144[X][X^{5}]-120[X]^{6}).
\end{eqnarray}
with $[X^{k}]\equiv\mbox{tr}(X^{k})$, for conciseness. Clearly, the above comprise the main algebraic invariants up to sixth order.

\end{document}